\newcommand*{\DEBUG}{}%
\newcommand{\fixme}[1]{{\textcolor{red}{\bf{\textsf{FIXME: #1}}}}}
\newcommand{\bug}[1]{{\textcolor{blue}{\bf{\textsf{BUG: #1}}}}}
\newcommand{\idea}[1]{{\textcolor{blue}{\bf{\textsf{IDEA: #1}}}}}
\newcommand{\TODO}[1]{{\textcolor{red}{\bf{\textsf{ 
TODO: #1
}}}}}
\newcommand{\fixme}[1]{}
\newcommand{\bug}[1]{}
\newcommand{\TODO}[1]{}
\newcommand{\idea}[1]{}
\newclass{\COMSLIP}{COM\mbox{-}SLIP}
\newclass{\COMSLIPCUP}{COM\mbox{-}SLIP^{\cup}}
\newclass{\DCM}{DCM}
\newclass{\eDCM}{eDCM}
\newclass{\eNPDA}{eNPDA}
\newclass{\DPDA}{DPDA}
\newclass{\RDPDA}{RDPDA}
\newclass{\PDA}{PDA}
\newclass{\DCMNE}{DCM_{NE}}
\newclass{\TwoDCM}{2DCM}
\newclass{\NCM}{NCM}
\newclass{\eNCM}{eNCM}
\newclass{\eNQA}{eNQA}
\newclass{\eNSA}{eNSA}
\newclass{\eNPCM}{eNPCM}
\newclass{\eNQCM}{eNQCM}
\newclass{\eNSCM}{eNSCM}
\newclass{\DPCM}{DPCM}
\newclass{\NPCM}{NPCM}
\newclass{\NQCM}{NQCM}
\newclass{\NSCM}{NSCM}
\newclass{\NPDA}{NPDA}
\newclass{\TRE}{TRE}
\newclass{\NFA}{NFA}
\newclass{\DFA}{DFA}
\newclass{\NCA}{NCA}
\newclass{\DCA}{DCA}
\newclass{\DTM}{DTM}
\newclass{\NTM}{NTM}
\newclass{\DLOG}{DLOG}
\newclass{\CFG}{CFG}
\newclass{\ETOL}{ET0L}
\newclass{\EDTOL}{EDT0L}
\newclass{\CFP}{CFP}
\newclass{\ORDER}{O}
\newclass{\MATRIX}{M}
\newclass{\BD}{BD}
\newclass{\LB}{LB}
\newclass{\ALL}{ALL}
\newclass{\decLBD}{decLBD}
\newclass{\StLB}{StLB}
\newclass{\SBD}{SBD}
\newclass{\TCA}{TCA}
\newclass{\RNCSA}{RNCSA}
\newclass{\RDCSA}{RDCSA}
\newclass{\DCSA}{DCSA}
\newclass{\NCSA}{NCSA}
\newclass{\DCSACM}{DCSACM}
\newclass{\NCSACM}{NCSACM}
\newclass{\NTMCM}{NTMCM}
\newclass{\CFREEL}{CFL}
\newclass{\UFIN}{U\mbox{-}IND}
\newclass{\UFINONE}{\LL(IND_{{UFIN}_1})}
\newclass{\FIN}{\LL(IND_{FIN})}
\newclass{\ILIN}{L\mbox{-}IND}
\newclass{\IRLIN}{RL\mbox{-}IND}
\newclass{\ETOLfin}{\LL(ET0L_{FIN})}
\newsavebox{\spacebox}
\newcommand{\LL}{{\cal L}}
\newcommand{\MM}{{\cal M}}
\newcommand{\FF}{{\cal F}}
\DeclareMathOperator{\comm}{comm}
\begin{document}

\markboth{Oscar H. Ibarra, Ian McQuillan}
{Semilinearity of Families of Languages}

%
\catchline{}{}{}{}{}
%

\title{Semilinearity of Families of Languages\thanks{\textcopyright 2022. This manuscript version is made available under the CC-BY 4.0 license \url{https://creativecommons.org/licenses/by/4.0/}. Published at {\it International Journal of Foundations of Computer Science}, 31 (8), 1179--1198 (2020) \url{https://doi.org/10.1142/S0129054120420095}}}

\author{Oscar H. Ibarra}

\address{
Department of Computer Science, University of California, Santa Barbara, CA 93106, USA\\ \email{ibarra@cs.ucsb.edu}}

\author{Ian McQuillan}

\address{
Department of Computer Science, University of Saskatchewan
	Saskatoon, SK S7N 5A9, Canada\\
	\email{mcquillan@cs.usask.ca}
}

\maketitle

\begin{history}
\received{(Day Month Year)}
\accepted{(Day Month Year)}
\comby{(xxxxxxxxxx)}
\end{history}

\begin{abstract}
Techniques are developed for creating new and general language families of only semilinear languages, and for showing families only contain
semilinear languages.
It is shown that for language families $\LL$ that are semilinear full trios,
the smallest full
AFL containing $\LL$ that is also closed under intersection with languages in  $\NCM$
(where $\NCM$ is the family of languages accepted by
$\NFA$s augmented with reversal-bounded counters), is also semilinear.
If these closure properties are effective, this also immediately implies decidability of membership,
emptiness, and infiniteness for these general families.
From the general techniques, new grammar systems are given that are extensions of 
well-known families of semilinear full trios, whereby it is implied that these extensions must
only describe semilinear languages. This also implies positive decidability properties for the new systems.
Some characterizations
of the new families are also given.
\end{abstract}

\keywords{semilinearity; closure properties; counter machines; pushdown automata; decidability.}

\section{Introduction}

One-way nondeterministic reversal-bounded multicounter machines ($\NCM$) operate like $\NFA$s with $\lambda$ transitions, where there are some number
of stores that each can contain some non-negative integer. The transition function can detect whether each counter is zero or non-zero, and optionally increment or decrement each counter; however, there is a bound on the number of changes each counter can make between non-decreasing and non-increasing. These machines have been extensively studied in the literature, for example in \cite{Ibarra1978}, where it was shown that $\NCM$s only accept semilinear languages (defined in Section \ref{sec:prelims}). As the semilinear property is effective for $\NCM$ (in that, the proof consists of an algorithm for constructing a finite representation of the semilinear sets), this implies that $\NCM$s have
decidable membership, emptiness, and infiniteness properties, as emptiness and infiniteness can be decided easily on semilinear sets (and membership follows from emptiness by effective closure under intersection with regular languages). $\NCM$ machines have been applied extensively in the literature, for example, to model checking and verification \cite{IBARRA2002165,IbarraBultanSu,IbarraSu,IbarraBultan}, often using the positive decidability properties of the family. 

More general machine models have been studied with an unrestricted pushdown automaton augmented by some number of reversal-bounded counters ($\NPCM$, \cite{Ibarra1978}). Despite the unrestricted pushdown, the languages accepted are all semilinear, implying they have the same decidable properties. This family too has been applied to several verification problems \cite{DangIbarraBultan,IbarraDangTCS}, including model checking recursive programs with numeric data types \cite{HagueLin2011}, synchronization- and reversal-bounded analysis of multithreaded programs \cite{HagueLinCAV}, for showing decidable properties of models of integer-manipulating programs with recursive parallelism \cite{HagueLinRP}, and for decidability of problems on commutativity \cite{eDCM}. In these papers, the positive decidability properties --- the result of the semilinearity --- plus the use of the main store (the pushdown), plus the counters, played a key role. Hence, (effective) semilinearity is a crucial property for families of languages.

The ability to augment a machine model with reversal-bounded counters and to only accept semilinear languages is not unique to pushdown automata; in \cite{Harju2002278}, it was found that many classes of machines $\MM$ accepting semilinear languages could be augmented with reversal-bounded counters, and the resulting family
$\MM_c$ would also only accept semilinear languages. 
This includes models such as Turing machines with a one-way read-only input tape and a finite-crossing\footnote{A worktape is finite-crossing if there is a  bound on the number of times the boundary of all neighboring cells on the worktape are crossed.} worktape.
However, a precise formulation of which classes of machines this pertains to was not given.

Here, a precise formulation of families of languages that can be ``augmented'' with counters will be examined in terms of closure properties rather than machine models. This allows for application to families described by machine models, or grammatical models. It is shown that for any full trio (a family closed under homomorphism, inverse homomorphism, and intersection with regular languages) of semilinear languages $\LL_0$, then the smallest full AFL $\LL$ (a full trio also closed under union, concatenation, and Kleene-*) containing 
$\LL_0$ that is closed under intersection with languages in $\NCM$, must only contain semilinear languages. Furthermore, if the closure properties and semilinearity are effective in $\LL_0$, this implies a decidable membership, emptiness, and infiniteness problem in $\LL$. Hence, this provides a new method for creating general families of languages with positive decidability properties.

Several specific models are created by adding counters. For example, indexed grammars are a well-studied general grammatical model like context-free grammars except where nonterminals keep stacks of ``indices''. Although this system can generate
non-semilinear languages, linear indexed grammars (indexed grammars with at most one nonterminal in the right hand side of
every production) generate only semilinear languages \cite{DP}. Here, we define {\em linear indexed grammars with counters}, akin to linear indexed grammars, where every sentential form contains the usual sentential form, plus $k$ counter values; each production operates as usual and can also optionally increase each counter by some amount; and a terminal word can be generated only if it can be produced with all counter values equal. It is shown that the family of languages generated must be semilinear since it is contained in the smallest full AFL containing the intersection of linear indexed languages and $\NCM$ languages. A characterization is also shown: linear indexed grammars with counters generate exactly those languages obtained by intersecting a linear indexed language with an $\NCM$ and then applying a homomorphism. Furthermore, it is shown that right linear indexed grammars (where terminals only appear to the left of nonterminals in productions) with counters coincide exactly with the machine model $\NPCM$. Therefore, linear indexed grammars with counters are a natural generalization of $\NPCM$ containing only semilinear languages. This model is generalized once again as follows: an indexed grammar is uncontrolled finite-index if, there is a value $k$ such that, for every derivation in the grammar, there are at most $k$ occurrences of nonterminals in every sentential form. It is known that every uncontrolled finite-index indexed grammar generates only semilinear languages \cite{LATA2017journal,ICALP2015}. It is shown here that uncontrolled finite-index indexed grammars with counters generate only semilinear languages, which is also a natural generalization of both linear indexed grammars with counters and $\NPCM$. This immediately shows decidability of membership, emptiness, and infiniteness for this family.

Lastly, the closure property theoretic method of adding counters is found to often be more helpful than the machine model method of \cite{Harju2002278} in terms of determining whether the resulting family is semilinear, as here a machine model $\MM$ is constructed such that the language family accepted by $\MM$ is a semilinear full trio, but adding counters to the model to create $\MM_c$ accepts non-semilinear languages. This implies from our earlier results, that $\MM_c$ can accept languages that cannot be obtained from any accepted by $\MM$ by allowing any number of intersections with $\NCM$s combined with any of the full AFL operations.

This paper therefore contains useful new techniques for creating new language families, and for showing existing language families only contain semilinear languages, which can then be used to immediately obtain decidable emptiness, membership, and infiniteness problems. Such families can perhaps also be applied to various areas, such as to verification, similarly to the use of $\NPCM$. A preliminary version of this paper appeared in \cite{CIAA2018}. This version includes all missing proofs omitted due to space constraints, and the new Proposition \ref{fullAFLClosed} which allows for some of the other proposition statements to be more general. Section \ref{closure} is also new.

\section{Preliminaries}
\label{sec:prelims}

In this section, preliminary background and notation is given. 

Let $\mathbb{N}_0$ be the set of non-negative integers, and let $\mathbb{N}_0^k$ be the set of all $k$-tuples of non-negative integers. A set $Q \subseteq \mathbb{N}_0^k$ is {\em linear} if there exists vectors $\vec{v_0}, \vec{v_1}, \ldots, \vec{v_l} \in \mathbb{N}_0^k$ such that $Q = \{ \vec{v_0} + i_1 \vec{v_1} + \cdots + i_l \vec{v_l} \mid i_1, \ldots, i_l \in \mathbb{N}_0\}$. Here, $\vec{v_0}$ is called the {\em constant}, and $\vec{v_1}, \ldots, \vec{v_l}$ are called the {\em periods}. A set $Q$ is called semilinear if it is a finite union of linear sets.

Introductory knowledge of formal language and automata theory is assumed such as nondeterministic finite automata ($\NFA$s), pushdown automata ($\NPDA$s), Turing machines, and closure properties
\cite{HU}. An {\em alphabet} $\Sigma$ is a finite set of symbols, a {\em word} $w$ over $\Sigma$ is a finite sequence of symbols from $\Sigma$, and $\Sigma^*$ is the set of all words over $\Sigma$ which includes the empty word $\lambda$. 
A {\em language} $L$ over $\Sigma$ is any $L \subseteq \Sigma^*$. The {\em complement} of a language $L \subseteq \Sigma^*$, denoted by $\overline{L}$, is $\Sigma^* - L$.

Given a word $w\in \Sigma^*$, the length of $w$ is denoted by $|w|$. For $a\in \Sigma$, the number of $a$'s in $w$ is denoted by $|w|_a$. 
 Given a word $w$ over an alphabet $\Sigma = \{a_1, \ldots, a_k\}$, the Parikh image of $w$ is $\psi(w) = (|w|_{a_1}, \ldots, |w|_{a_k})$, and the Parikh image of a language $L$ is $\{\psi(w) \mid w \in L\}$. The commutative closure of a language $L$ is the language $\comm(L) = \{w \in \Sigma^* \mid \psi(w) = \psi(v), v \in L\}$. Two languages are {\em letter-equivalent} if $\psi(L_1) = \psi(L_2)$.

A language $L$ is {\em semilinear} if $\psi(L)$ is a semilinear set. Equivalently, a language is semilinear if and only if it is letter-equivalent to some regular language \cite{harrison1978}. A family of languages is semilinear if all languages in it are semilinear, and it is said to be {\em effectively semilinear} if there is an algorithm to construct the constant and periods for each linear set from a representation of each language in the family.
 For example, it is well-known that all context-free languages are effectively semilinear \cite{Parikh}.

We will only define $\NCM$ and $\NPCM$ informally here, and refer to \cite{Ibarra1978} for a formal definition.
A one-way nondeterministic counter machine can be defined equivalently to a one-way nondeterministic pushdown automaton \cite{HU} with only a bottom-of-pushdown marker plus one other symbol. Hence, the machine can add to the counter (by pushing), subtract from the counter (by popping), and can detect emptiness and non-emptiness of the pushdown. A $k$-counter machine has $k$ independent counters. A $k$-counter machine $M$ is $l$-reversal-bounded, if $M$ makes at most $l$ changes between non-decreasing and non-increasing of each counter in every accepting computation. Let $\NCM$ be the class of one-way nondeterministic $l$-reversal-bounded $k$-counter machines, for some $k,l$ ($\DCM$ for deterministic machines). Let $\NPCM$ be the class of machines with one unrestricted pushdown plus some number of reversal-bounded counters. By a slight abuse of notation, we also use these names for the family of languages they accept.

Notation from AFL (abstract families of languages) theory is used from \cite{G75}.
A {\em full trio} is any family of languages closed under homomorphism, inverse homomorphism, and
intersection with regular languages. Furthermore, a {\em full AFL} is a full trio closed under union, concatenation, and Kleene-*.
Given a language family $\LL$, the smallest family containing $\LL$ that is closed
under arbitrary homomorphism is denoted by $\hat{\cal H}(\LL)$,
the smallest full trio containing
$\LL$ is denoted by $\hat{\cal T}(\LL)$, and the smallest
full AFL containing $\LL$ is denoted by $\hat{\cal F}(\LL)$.
Given  families $\LL_1$ and $\LL_2$, let
$\LL_1 \wedge \LL_2 = \{L_1 \cap L_2 \mid L_1 \in \LL_1, L_2 \in \LL_2\}$.
We denote by $\hat{\cal F}_{\NCM}(\LL)$ the smallest full AFL containing $\LL$ that is closed under intersection with languages from $\NCM$.

\section{Full AFLs Containing Counter Languages}

This section will start by showing that for every semilinear full trio $\LL$, the smallest full AFL containing $\LL$ that is also closed under intersection with $\NCM$ is a semilinear full AFL (Proposition \ref{fullAFLClosed}). First, an intermediate result is required.
\begin{proposition}
\label{prop1}
If $\cal L$  is a semilinear full trio, then
$\hat{\cal T}(\LL \wedge \NCM) = \hat{\cal H}(\LL \wedge \NCM)$ is a semilinear full trio.
\end{proposition}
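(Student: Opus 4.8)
The plan is to prove the two-part claim by first establishing the equality $\hat{\mathcal T}(\LL \wedge \NCM) = \hat{\mathcal H}(\LL \wedge \NCM)$, and then proving that the common family is semilinear. For the first equality, the inclusion $\hat{\mathcal H}(\LL \wedge \NCM) \subseteq \hat{\mathcal T}(\LL \wedge \NCM)$ is immediate since every full trio is closed under arbitrary homomorphism. For the reverse inclusion, it suffices to show that $\hat{\mathcal H}(\LL \wedge \NCM)$ is already closed under inverse homomorphism and intersection with regular languages. The key observation is that $\LL \wedge \NCM$ is ``almost'' a full trio already: since $\LL$ is a full trio and $\NCM$ is a full trio closed under intersection (in fact a full AFL), the family $\LL \wedge \NCM$ inherits closure under inverse homomorphism (pull back an inverse homomorphism into both factors: $h^{-1}(L_1 \cap L_2) = h^{-1}(L_1) \cap h^{-1}(L_2)$, using that both $\LL$ and $\NCM$ are closed under inverse homomorphism) and under intersection with regular languages (a regular language lies in $\NCM$, and $\NCM$ is closed under intersection, so $(L_1 \cap L_2) \cap R = L_1 \cap (L_2 \cap R)$ with $L_2 \cap R \in \NCM$). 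Thus $\LL \wedge \NCM$ is closed under inverse homomorphism and intersection with regular languages, and applying $\hat{\mathcal H}$ to such a family yields a full trio — this is a standard AFL-theoretic fact (the only operation missing is arbitrary homomorphism, and the closure operations commute appropriately with homomorphism in the right order).

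Next I would show semilinearity of $\hat{\mathcal H}(\LL \wedge \NCM)$. Every language in $\LL \wedge \NCM$ has the form $L_1 \cap L_2$ with $L_1 \in \LL$ semilinear and $L_2 \in \NCM$ semilinear. The crucial fact is that the intersection of a semilinear language with an $\NCM$ language is semilinear: indeed, one can invoke the result of \cite{Harju2002278} / the standard technique, or argue directly via Parikh images — but the cleanest route is to note that $L_1 \in \LL$ where $\LL$ is a semilinear \emph{full trio}, and that intersecting with an $\NCM$ language is exactly the ``adding reversal-bounded counters'' operation, so $L_1 \cap L_2$ is semilinear. Actually the simplest self-contained argument: since $L_1$ is semilinear it is letter-equivalent to a regular language $R_1$; however letter-equivalence is not preserved under intersection, so this needs more care. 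Instead I would argue that membership in $\LL \wedge \NCM$ being semilinear follows because $\LL$ is a full trio and one can encode the product construction — but since the proposition only asks about the trio/hom closure, the real content is: semilinearity is preserved under homomorphism, inverse homomorphism, and intersection with regular languages. Homomorphic images of semilinear languages are semilinear (Parikh image of $h(L)$ is a linear image of $\psi(L)$). Intersection with a regular language preserving semilinearity, and inverse homomorphism preserving semilinearity, are the two delicate closure properties; these hold because the full trio generated by a single semilinear language is semilinear — this is essentially the statement that $\hat{\mathcal T}$ of a semilinear full trio is semilinear, which can be reduced to the fact that $\NCM$-style constructions or Ginsburg–Spanier-type arguments preserve semilinearity. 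I would cite or re-derive that a full trio generated by semilinear languages is semilinear.

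The main obstacle I anticipate is the verification that semilinearity is preserved under \emph{inverse homomorphism} together with \emph{intersection with regular languages} starting only from the hypothesis ``$\LL$ is a semilinear full trio'' and ``$\NCM$ is semilinear.'' The subtle point is that $L_1 \cap L_2$ for $L_1 \in \LL$, $L_2 \in \NCM$ need not itself lie in $\LL$ or in $\NCM$, so one cannot directly quote the semilinearity of either family. The right framework is to observe that $\LL \wedge \NCM \subseteq \widehat{\mathcal F}_{\NCM}(\LL)$ and to prove directly, by structural induction on how a language is built from $\LL \wedge \NCM$ using homomorphism (the only operation in $\hat{\mathcal H}$), that semilinearity is maintained — here only homomorphism is applied, which is the easy direction for semilinearity. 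So the genuine work is showing that each $L_1 \cap L_2 \in \LL \wedge \NCM$ is semilinear; for this I would use the standard argument that combines a semilinear representation of $\psi(L_1)$ with the structure of the $\NCM$ $M_2$ accepting $L_2$: run a counter machine tracking the counters of $M_2$ against the period structure of $L_1$, concluding via \cite{Ibarra1978} that the resulting set of Parikh vectors is semilinear. This is precisely the ``$\MM_c$ is semilinear'' phenomenon of \cite{Harju2002278} applied with $\MM$ being an $\NFA$ recognizing a letter-equivalent copy, and it is where the hypothesis that $\LL$ is a semilinear full trio (hence closed under inverse homomorphism, allowing reduction to a bounded-language or regular-language situation) does the real work.
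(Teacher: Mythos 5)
Your treatment of the equality $\hat{\cal T}(\LL \wedge \NCM) = \hat{\cal H}(\LL \wedge \NCM)$ is essentially sound (and a slightly cleaner route than the paper's): $\LL \wedge \NCM$ is indeed closed under inverse homomorphism and under intersection with regular languages by folding those operations into the two factors, so its homomorphic closure is a full trio by the standard normal form for trio operations. The problem is the semilinearity half, which is the real content of the proposition. Your pivotal claim --- ``the intersection of a semilinear language with an $\NCM$ language is semilinear'' --- is false, and the paper itself exhibits a counterexample later on: take a non-semilinear unary $L \subseteq a^*$ and set $L' = bLc \cup ca^*b$; then $L'$ is semilinear (letter-equivalent to $ca^*b$) but $L' \cap ba^*c = bLc$ is not, and $ba^*c$ is regular, hence in $\NCM$. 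So semilinearity of $L_1$ alone buys nothing, and the appeal to the ``adding reversal-bounded counters preserves semilinearity'' phenomenon of \cite{Harju2002278} cannot be made at this level of generality either --- Section 4 of the paper is devoted to showing that this phenomenon fails for some semilinear full trios of machines ($\RNCSA$ versus $\RNCSA_c$). You sense the difficulty (``letter-equivalence is not preserved under intersection'') but never resolve it; ``run a counter machine against the period structure of $L_1$'' is not a construction, since $\psi(L_1)$ does not determine which words lie in $L_1$.

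The missing idea is that one must use the hypothesis that $\LL$ is a full trio to keep the \emph{entire} construction inside $\LL$, where every language has a semilinear Parikh image, and enforce the counter constraint only at the level of Parikh images. Concretely: given $L_2$ accepted by a $1$-reversal $k$-counter $\NCM$ $M_2$, replace each counter increment/decrement by reading a fresh letter $C_i$ or $D_i$; the resulting ``decounted'' machine is an $\NFA$ accepting a regular language $R_2$ over $\Sigma \cup \{C_1,D_1,\ldots,C_k,D_k\}$, and $L_1 \cap L_2$ is the image under the $\Delta$-erasing homomorphism of those words of $h_\Sigma^{-1}(L_1) \cap R_2$ having equally many $C_i$'s as $D_i$'s. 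Now $h_\Sigma^{-1}(L_1) \cap R_2 \in \LL$ by the full trio properties, so its Parikh image is semilinear; intersecting that \emph{semilinear set} with the semilinear constraint $\#C_i = \#D_i$ and projecting away the $\Delta$-coordinates yields $\psi(L_1 \cap L_2)$, which is therefore semilinear. (The same device, applied after carrying the trio operations $g_1^{-1}$, $\cap R_1$, $g_2$ over the extended alphabet, handles a general member of $\hat{\cal T}(\LL\wedge\NCM)$, and it is also what the paper uses to re-express such a member as a homomorphic image of a language in $\LL \wedge \NCM$, i.e., the other inclusion of the equality.) Without this encoding step your argument does not go through.
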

\begin{proof}
Let ${\cal C} = \hat{\cal T}(\LL \wedge \NCM)$, and let 
$\hat{L} \in {\cal C}$ over alphabet $\Gamma = \{d_1, \ldots, d_s\}$. 
By definition, ${\cal C}$ is a full trio. It will be shown that $\hat{L}$ is semilinear.  Then $\hat{L}$ can be obtained from a language  $L$
in $\LL \wedge \NCM$ via a finite sequence of operations involving homomorphisms,
inverse homomorphisms, and intersections with regular sets.
Theorem 3.2.3 of \cite{G75} shows that for all non-empty
languages $L$, $\hat{\cal T}(L) = \{g_2(g_1^{-1}(L) \cap R_1) \mid R_1 \mbox{~is regular},
g_1, g_2 \mbox{~are decreasing homomorphisms}\}$. A homomorphism $g$
is decreasing if and only if $|g(a)| \leq 1$, for all letters $a$. Such homomorphisms
are called {\em weak codings}.
Hence, it is enough to consider that 
$\hat{L}$ is obtained from $L$ via the following sequence: an application of an inverse weak coding homomorphism
$g_1$, followed by an intersection with a regular language $R_1$, followed by an application of a weak coding homomorphism
$g_2$. Thus, $\hat{L} = g_2(g_1^{-1}(L) \cap R_1)$.
Since $L$ is in $\LL \wedge \NCM$,  there are $L_1 \in \LL$ and 
$L_2 \in \NCM$ such that $L = L_1 \cap L_2$. 
Let $L_2$ be accepted by a $k$-counter reversal-bounded $\NCM$ $M_2$, 
where, without loss of generality, all counters are $1$-reversal-bounded \cite{Ibarra1978},
all counters increase at least once, and all counters decrease to zero
before accepting.

Let $\Sigma = \{a_1, \ldots, a_n\}$ be the alphabet of  $L_1 \cup  L_2$, and so $L,L_1,L_2 \subseteq \Sigma^*$, $g_1$ is from $\bar{\Sigma}^*$ to $\Sigma^*$ for some alphabet $\bar{\Sigma}$, and so $g_1^{-1}(L) \subseteq \bar{\Sigma}^*, R_1 \subseteq \bar{\Sigma}^*$, and $g_2$ is from $\bar{\Sigma}^*$ to $\Gamma^*$. Introduce new symbols $\Delta = \{C_1, D_1, \ldots, C_k, D_k\}$ ($k$ is the number of counters).

Let $h_{\Sigma}$ be a homomorphism from $(\Sigma \cup \Delta)^*$ to $\Sigma^*$ 
that fixes each letter of $\Sigma$ and erases all letters of $\Delta$, and let $h_{\bar{\Sigma}}$ be a homomorphism $(\bar{\Sigma} \cup \Delta)^*$ to $\bar{\Sigma}^*$ that fixes each letter of $\bar{\Sigma}$ and erases all letters of $\Delta$.
There exists $R_2 \subseteq (\Sigma \cup \Delta)^*$, a regular set,
accepted by a nondeterministic finite automaton $M_2'$
that ``encodes'' the computation of $M_2$ (without doing the counting), as follows:
\begin{itemize}
\item $M_2'$ switches states as in $M_2$; $M_2'$ starts by simulating transitions on each counter being zero; 
\item every time $M_2$ adds to counter $i$, $M_2'$ instead reads
the input letter $C_i$; this is forced to happen at least once for each $i$, and after it reads the first $C_i$, it simulates transitions of $M_2$ where counter $i$ is positive;
\item every time $M_2$ subtracts from counter $i$,
$M_2'$ reads $D_i$; $M_2'$ verifies that at least one $D_i$ is read, and that no $C_i$ is read afterwards;
\item for each $i$, $1 \leq i \leq k$, at some nondeterministically guessed spot
after reading some $D_i$ symbol, 
$M_2'$ guesses that counter $i$ has hit zero,
and it no longer reads any $D_i$ symbol and simulates only transitions
on counter $i$ being zero;
\item $M_2'$ must end in a final state.
\end{itemize}

Let $L' =  h_{\Sigma}^{-1}(L_1)  \cap R_2$ (here, $h_{\Sigma}^{-1}(L_1)$ has symbols of $\Delta$ ``shuffled in'' to $L_1$). Let $L''$ be those words in $L'$
with the same number of $C_i$'s as $D_i$'s, for each $i$. Then it is evident that
$h_{\Sigma}(L'') = L$ as the $\NFA$ $M_2'$ that accepts $R_2$ is behaving like $M_2$ but without counting; however the counting is occurring using the intersection in $L''$, and then the counter
symbols of $\Delta$ are erased with $h_{\Sigma}$.
But looking only at $L'$, it must be that $L' \in \LL$ since $\LL$ is a full trio.
Let $\bar{g_1}$ be the extension of $g_1$ to be a homomorphism from $(\bar{\Sigma} \cup \Delta)^*$ to $(\Sigma \cup \Delta)^*$, where each letter of $\Delta$ is mapped to itself, and let $\bar{g_2}$ be the extension of $g_2$  to be a homomorphism from $(\bar{\Sigma} \cup \Delta)^*$ to $(\Gamma \cup \Delta)^*$,
where each letter of $\Delta$ is mapped
to itself. Let
$R'_1 = h_{\bar{\Sigma}}^{-1}(R_1)$; that is, it has symbols of $\Delta^*$ ``shuffled in''.
Note that $R'_1$ is regular since the regular languages are
closed under inverse homomorphism.

Then $L''' = \bar{g}_2 ( \bar{g_1}^{-1} (L' ) \cap R'_1)$, which is also in 
$\LL$, over $(\Gamma \cup \Delta)^*$. Note that since $g_1$ is a weak coding homomorphism, $\bar{g_1}$ is as well, and therefore $\bar{g_1}^{-1}$ simply operates as $g_1^{-1}$ does while fixing all letters of $\Delta$. If we then take $L'''$ and intersect it with all words where the number of $C_i$'s is equal to the number of $D_i$'s for each $i$, and then erase all $C_i$'s and $D_i$'s, we obtain $\hat{L}$. Let the Parikh image order the
letters of this alphabet 
$d_1, \ldots, d_s, C_1,D_1, \ldots, C_k,D_k$. 
This Parikh image of  $L'''$
gives a set 
$Q'''\subseteq \mathbb{N}_0^{s+2k}$, which is semilinear since
$\LL$ is semilinear.
Let $Q''''$ be the set obtained from $Q'''$ by
enforcing that the number of $C_i$'s is equal to $D_i$'s for each
$i$, which is also semilinear since the intersection of two semilinear sets is again semilinear \cite{Gins}.
Then $\bar{Q}$, the set obtained from $Q''''$ by projection on
the first $s$ coordinates, is also semilinear and 
this is the Parikh image of $\hat{L}$. Hence, $\hat{L}$ is semilinear.

By definition, $\hat{\cal H}(\LL \wedge \NCM) \subseteq \hat{\cal T}(\LL \wedge \NCM)$. 
To show $\hat{\cal T}(\LL \wedge \NCM) \subseteq \hat{\cal H}(\LL \wedge \NCM)$,
let $\hat{L} \in \hat{\cal T}(\LL \wedge \NCM)$. Using the proof that $\hat{\cal T}(\LL \wedge \NCM)$ is semilinear above, from $L''' \in \LL$,
it is possible to then intersect this language with an $\NCM$
that verifies that the number of $C_i$'s is equal to the number
of $D_i$'s, for each $i$. And then, a homomorphism that
erases elements of $\Delta$ can be applied to obtain $\hat{L}$.
\end{proof}

The next result is relatively straightforward from
results in \cite{G75,Ginsburg1971}, however we have
not seen it explicitly stated as we have done. 
From Corollary 2, Section 3.4 of \cite{G75}, for any full trio $\LL$, the smallest full AFL containing $\LL$ is the substitution of the
regular languages into $\LL$.
And from \cite{Ginsburg1971}, the substitution closure of one
semilinear family into another is semilinear. Therefore, we obtain:
\begin{lemma}
If $\LL$ is a semilinear full trio, then 
$\hat{{\cal F}}(\LL)$ is semilinear.
\label{lemma2}
\end{lemma}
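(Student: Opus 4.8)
The plan is to reduce this to two already-available results and chain them together. The key observation, quoted just before the statement, is the structural characterization of the smallest full AFL: by Corollary 2 in Section 3.4 of \cite{G75}, for any full trio $\LL$, $\hat{\cal F}(\LL)$ equals the \emph{substitution} of the regular languages into $\LL$ --- that is, $\hat{\cal F}(\LL) = \{ \sigma(L) \mid L \in \LL, \sigma \text{ a regular substitution}\}$, where a regular substitution replaces each letter by a regular language. So the first step is simply to invoke this characterization to rewrite $\hat{\cal F}(\LL)$ as a substitution closure $\tau(\REG, \LL)$ (substituting $\REG$ into $\LL$).

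The second step is to apply the semilinearity-of-substitution result of \cite{Ginsburg1971}: if $\LL_1$ and $\LL_2$ are both semilinear families, then the family obtained by substituting languages of $\LL_1$ into languages of $\LL_2$ is again semilinear. Here I would take $\LL_1 = \REG$ (which is semilinear, indeed every regular language is letter-equivalent to itself and Parikh's theorem gives semilinearity) and $\LL_2 = \LL$ (semilinear by hypothesis). This immediately yields that $\hat{\cal F}(\LL)$ is semilinear. If one wants the effective version, one notes that both cited constructions are effective: the AFL characterization produces the regular substitution constructively, and the argument of \cite{Ginsburg1971} builds the semilinear representation of $\sigma(L)$ from those of the pieces (intuitively, the Parikh image of $\sigma(L)$ is obtained by taking the semilinear set for $L$ over its alphabet $\{a_1,\dots,a_m\}$ and ``expanding'' each coordinate $a_i$ according to the semilinear set of $\sigma(a_i)$, which is a semilinear operation).

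The only real content beyond citation is making sure the two external results compose cleanly on the nose, i.e.\ that the notion of ``substitution closure'' used in \cite{Ginsburg1971} matches the ``substitution of the regular languages into $\LL$'' used in \cite{G75}; both refer to the same operation of replacing each terminal symbol by a language drawn from the given family, so they align. I expect the main (minor) obstacle to be purely bibliographic bookkeeping --- checking that \cite{Ginsburg1971} states the substitution result in the generality of arbitrary semilinear families rather than, say, only for full AFLs or only for the context-free case --- but the statement there is about letter-equivalence and Parikh images, which is exactly the semilinear setting, so no extra work is needed. Hence the proof is essentially a two-line composition, and I would present it as such.
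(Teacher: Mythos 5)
Your proposal is correct and follows essentially the same route as the paper: the paper's own argument is precisely the two-step composition of the characterization of $\hat{\cal F}(\LL)$ as the substitution of the regular languages into $\LL$ (Corollary 2, Section 3.4 of \cite{G75}) with the result of \cite{Ginsburg1971} that substituting one semilinear family into another yields a semilinear family. Nothing further is needed.
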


For semilinear full trios $\LL$, $\hat{\cal T}(\LL \wedge \NCM)$ is a semilinear full trio
by Proposition \ref{prop1}, and starting with this family and applying Lemma \ref{lemma2},
the smallest full AFL containing intersections of languages in $\LL$ with $\NCM$ is semilinear.
\begin{proposition}   
\label{fullAFL}
If $\LL$ is a semilinear full trio, then $\hat{\cal F}(\LL \wedge \NCM)$ is semilinear.
\end{proposition}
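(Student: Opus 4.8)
The plan is to obtain Proposition~\ref{fullAFL} as an immediate consequence of Proposition~\ref{prop1} and Lemma~\ref{lemma2}, the key observation being that forming the smallest full AFL is insensitive to first closing a family under the full trio operations. Concretely, I would set $\LL' = \hat{\cal T}(\LL \wedge \NCM)$, the smallest full trio containing $\LL \wedge \NCM$. Since $\LL$ is a semilinear full trio, Proposition~\ref{prop1} tells us that $\LL'$ is itself a semilinear full trio.

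The next step is to verify the identity $\hat{\cal F}(\LL \wedge \NCM) = \hat{\cal F}(\LL')$. For the inclusion from left to right: $\hat{\cal F}(\LL')$ is a full AFL containing $\LL' \supseteq \LL \wedge \NCM$, so by minimality of $\hat{\cal F}(\LL \wedge \NCM)$ among full AFLs containing $\LL \wedge \NCM$ we get $\hat{\cal F}(\LL \wedge \NCM) \subseteq \hat{\cal F}(\LL')$. For the reverse inclusion: $\hat{\cal F}(\LL \wedge \NCM)$ is a full AFL, hence in particular a full trio, containing $\LL \wedge \NCM$, so it contains $\hat{\cal T}(\LL \wedge \NCM) = \LL'$; being a full AFL containing $\LL'$, it then contains $\hat{\cal F}(\LL')$. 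Hence the two families coincide.

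Finally, I would apply Lemma~\ref{lemma2} to $\LL'$: since $\LL'$ is a semilinear full trio, $\hat{\cal F}(\LL')$ is semilinear, and therefore so is $\hat{\cal F}(\LL \wedge \NCM)$, which is the claim.

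I do not expect a genuine obstacle here; essentially all of the real content has already been discharged in Proposition~\ref{prop1} (the counter-symbol shuffling and the closure of semilinear sets under intersection and projection) and in Lemma~\ref{lemma2} (the Ginsburg substitution result). The only point requiring a little care is the bookkeeping in the chain of inclusions showing that $\hat{\cal F}$ is unchanged by first passing to the full trio closure; this is routine from the minimality characterizations of $\hat{\cal T}$ and $\hat{\cal F}$, but it is worth stating explicitly so that Lemma~\ref{lemma2} is literally being invoked with a semilinear full trio as its hypothesis, as required.
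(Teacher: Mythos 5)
Your proposal is correct and follows exactly the paper's route: apply Proposition~\ref{prop1} to obtain that $\hat{\cal T}(\LL \wedge \NCM)$ is a semilinear full trio, then apply Lemma~\ref{lemma2} to that family. The only difference is that you spell out the (routine but implicit) identity $\hat{\cal F}(\LL \wedge \NCM) = \hat{\cal F}(\hat{\cal T}(\LL \wedge \NCM))$, which the paper takes for granted.
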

It is worth noting that this procedure can be iterated, as
therefore $\hat{\cal F}(\hat{\cal F}(\LL \wedge \NCM) \wedge \NCM)$ must also be a semilinear
full AFL, etc.\ for additional levels. However, it is an interesting open question as to whether there is a strict hierarchy with respect to this iteration.

One could also consider the smallest full AFL containing $\LL$ that is closed under intersection with $\NCM$. Here, the intersections with $\NCM$ can occur arbitrarily many times, even after or in between applying the other full AFL operations.
\begin{proposition}
If $\LL$ is a semilinear full trio, then $\hat{\cal F}_{\NCM}(\LL)$ is semilinear.
\label{fullAFLClosed}
\end{proposition}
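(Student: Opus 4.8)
The plan is to realize $\hat{\cal F}_{\NCM}(\LL)$ as the union of an increasing chain of semilinear full AFLs, obtained by alternately closing under intersection with $\NCM$ and under the full AFL operations, and then to invoke Proposition \ref{fullAFL} at each stage.

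First I would define $\LL_0 = \LL$ and, inductively, $\LL_{i+1} = \hat{\cal F}(\LL_i \wedge \NCM)$. Since every $\NFA$ is a (trivially reversal-bounded) counter machine, every regular language lies in $\NCM$; and because a full trio is closed under intersection with regular languages, for $L \in \LL_i$ over $\Sigma$ we have $L = L \cap \Sigma^* \in \LL_i \wedge \NCM$, so $\LL_i \subseteq \LL_i \wedge \NCM \subseteq \hat{\cal F}(\LL_i \wedge \NCM) = \LL_{i+1}$. Hence the chain $\LL_0 \subseteq \LL_1 \subseteq \cdots$ is increasing. By induction each $\LL_i$ is a \emph{semilinear full trio}: $\LL_0$ is one by hypothesis, and if $\LL_i$ is a semilinear full trio then Proposition \ref{fullAFL} gives that $\LL_{i+1} = \hat{\cal F}(\LL_i \wedge \NCM)$ is semilinear, and it is a full AFL, hence in particular a full trio, by construction.

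Next I would set $\LL_\omega = \bigcup_{i \geq 0} \LL_i$ and show $\LL_\omega = \hat{\cal F}_{\NCM}(\LL)$. For $\LL_\omega \subseteq \hat{\cal F}_{\NCM}(\LL)$, induct: $\LL_0 = \LL \subseteq \hat{\cal F}_{\NCM}(\LL)$, and if $\LL_i \subseteq \hat{\cal F}_{\NCM}(\LL)$ then $\LL_i \wedge \NCM \subseteq \hat{\cal F}_{\NCM}(\LL)$ by closure under intersection with $\NCM$, so $\LL_{i+1} = \hat{\cal F}(\LL_i \wedge \NCM) \subseteq \hat{\cal F}_{\NCM}(\LL)$ since the latter is a full AFL. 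For the reverse inclusion, since $\hat{\cal F}_{\NCM}(\LL)$ is the smallest full AFL containing $\LL$ that is closed under intersection with $\NCM$, it suffices to verify $\LL_\omega$ enjoys these properties. It contains $\LL = \LL_0$. It is closed under intersection with $\NCM$: if $L \in \LL_\omega$ then $L \in \LL_i$ for some $i$, and for any $L' \in \NCM$ we get $L \cap L' \in \LL_i \wedge \NCM \subseteq \LL_{i+1} \subseteq \LL_\omega$. And it is a full AFL: each full AFL operation (homomorphism, inverse homomorphism, intersection with a regular language, union, concatenation, Kleene-$*$) is finitary, so any finitely many arguments drawn from $\LL_\omega$ all lie in a common $\LL_N$ (the chain being increasing), and applying the operation leaves the result in the full AFL $\LL_N \subseteq \LL_\omega$. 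The conclusion is then immediate: every language of $\LL_\omega = \hat{\cal F}_{\NCM}(\LL)$ lies in some semilinear $\LL_i$, so $\hat{\cal F}_{\NCM}(\LL)$ is semilinear; moreover effectiveness is inherited stage by stage, which is what is needed for the decidability consequences.

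The step I would be most careful about — and the only genuine subtlety — is checking that the chain construction is legitimate, i.e.\ that each $\LL_i$ is again a semilinear \emph{full trio}, since this is precisely the hypothesis under which Proposition \ref{fullAFL} can be applied to pass from $\LL_i$ to $\LL_{i+1}$; the passage to the union then works only because all the AFL operations are finitary, so no transfinite iteration is required.
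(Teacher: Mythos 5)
Your proof is correct and takes essentially the same route as the paper: the paper observes that any $\hat{L} \in \hat{\cal F}_{\NCM}(\LL)$ arises from finitely many operations with some number $n$ of intersections with $\NCM$, hence lies in the $n$-fold iterate $\hat{\cal F}(\cdots\hat{\cal F}(\LL \wedge \NCM)\cdots \wedge \NCM)$, which is semilinear by applying Proposition \ref{fullAFL} $n$ times. Your increasing-chain construction $\LL_{i+1} = \hat{\cal F}(\LL_i \wedge \NCM)$ and the verification that $\bigcup_i \LL_i$ is itself a full AFL closed under intersection with $\NCM$ is just a more carefully justified packaging of that same finite-iteration argument.
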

\begin{proof}
Let $\hat{L} \in  \hat{\cal F}_{\NCM}(\LL)$. Then $\hat{L}$ is obtained from some language $L \in \LL$ via some sequence of the full AFL operations, plus some number, $n$ say, of intersections with $\NCM$s. Hence,
$$\hat{L} \in {\cal C} = \overbrace{\hat{\cal F}(\hat{\cal F}( \cdots \hat{\cal F}}^{n}(\LL \wedge \overbrace{\NCM) \wedge \NCM) \wedge \cdots \wedge \NCM}^{n}).$$ By iterating Proposition \ref{fullAFL} $n$ times, ${\cal C}$ is semilinear, hence $\hat{L}$ is semilinear.
\end{proof}

In contrast, it is shown in \cite{G75} that 
$\hat{\cal H}(\hat{\cal F}(\{a^n b^n \mid n>0\}) \wedge \hat{\cal F}(\{a^n b^n \mid n>0\}))$ is equal to the family of recursively enumerable languages. 
Therefore, $\hat{\cal H}(\hat{\cal F}(\NCM) \wedge \hat{\cal F}(\NCM))$ is also equal to the family of recursively enumerable languages (which is not semilinear).
But in Propostion \ref{fullAFLClosed}, only intersections with languages in $\NCM$s are allowed, and not intersections with languages in $\hat{\cal F}(\NCM)$, thereby creating the large difference.

Many acceptor and grammar systems are known to be semilinear full trios, such as finite-index
$\ETOL$ systems \cite{RozenbergFiniteIndexETOL}, indexed grammars with a bound on the number of variables appearing in every sentential form (called uncontrolled finite-index) \cite{LATA2017journal}, multi-push-down machines (which have $k$ pushdowns that can simultaneously be written to, but they can only pop from the first non-empty pushdown) \cite{multipushdown}, a Turing machine variant with one finite-crossing worktape \cite{Harju2002278}, and pushdown machines that can flip their pushdown up to $k$ times \cite{Holzer2003}.
\begin{corollary}
Let $\LL$ be any of the following families:
\begin{itemize}
\item languages generated by context-free grammars,
\item languages generated by finite-index $\ETOL$, 
\item languages generated by uncontrolled finite-index indexed languages, 
\item languages accepted by one-way multi-push-down machine languages,
\item languages accepted by one-way read-only input nondeterministic Turing machines
with a two-way finite-crossing read/write worktape,
\item languages accepted by one-way $k$-flip pushdown automata.
\end{itemize}
Then $\hat{\cal F}_{\NCM}(\LL)$ is
a semilinear full AFL.
\end{corollary}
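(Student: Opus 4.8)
The plan is to reduce the Corollary to a direct application of Proposition \ref{fullAFLClosed}. By that proposition, $\hat{\cal F}_{\NCM}(\LL)$ is a semilinear full AFL provided only that $\LL$ is a semilinear full trio. Moreover, each of the constructions in Proposition \ref{prop1} and Lemma \ref{lemma2} (Parikh-image computations for $\NCM$, substitution of regular languages, intersection and projection of semilinear sets) is effective, so the conclusion really is ``semilinear full AFL'' in the effective sense once the hypotheses are effective. Hence the entire content of the Corollary is the verification that each listed family is indeed a semilinear full trio; the work is a citation-gathering exercise rather than a new argument.

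So the first step is to dispatch the full-trio part. For each family I would cite the standard source establishing closure under homomorphism, inverse homomorphism, and intersection with regular languages: context-free languages are the archetypal full AFL (\cite{G75}, \cite{HU}); finite-index $\ETOL$ is a full trio by \cite{RozenbergFiniteIndexETOL}; uncontrolled finite-index indexed languages form a full trio by \cite{LATA2017journal}; one-way multi-push-down machine languages by \cite{multipushdown}; one-way read-only-input nondeterministic Turing machines with a two-way finite-crossing worktape by \cite{Harju2002278}; and one-way $k$-flip pushdown automata by \cite{Holzer2003}. These are exactly the references already marshalled in the paragraph immediately preceding the Corollary, so this step is essentially transcription.

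The second step is to dispatch the semilinearity part, again by citation: context-free languages are effectively semilinear by Parikh's theorem \cite{Parikh}; finite-index $\ETOL$ by \cite{RozenbergFiniteIndexETOL}; uncontrolled finite-index indexed languages by \cite{LATA2017journal} (see also \cite{ICALP2015}); multi-push-down languages by \cite{multipushdown}; the finite-crossing Turing machine variant by \cite{Harju2002278}; and $k$-flip pushdown automata by \cite{Holzer2003}. Combining: each listed $\LL$ is a semilinear full trio, so Proposition \ref{fullAFLClosed} applies verbatim and yields that $\hat{\cal F}_{\NCM}(\LL)$ is a semilinear full AFL for every one of them.

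The main obstacle, such as it is, is not mathematical depth but making sure that for each family the literature genuinely supplies \emph{both} properties, and in particular that ``uncontrolled finite-index indexed grammar'' is the right restriction (not general indexed grammars, which generate non-semilinear languages and are therefore excluded from the list) and that the Turing machine variant is the finite-\emph{crossing} one (an unrestricted worktape would again fail semilinearity). Once those bookkeeping points are respected, there is nothing further to prove: the Corollary is an immediate instantiation of Proposition \ref{fullAFLClosed}.
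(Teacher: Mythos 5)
Your proposal is correct and follows exactly the paper's route: the corollary is an immediate instantiation of Proposition \ref{fullAFLClosed}, with the only work being the citations (given in the paragraph preceding the corollary) that each listed family is a semilinear full trio. Nothing to add.
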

A simplified analogue to this result is known for certain types of
machines \cite{Harju2002278}, although the new result here
is defined entirely using closure properties rather than machines.
Furthermore, the results in \cite{Harju2002278}
do not allow Kleene-* type
closure as part of the full AFL properties. For the machine models ${\cal T}$ above, it is an easy
exercise to show that augmenting them with reversal-bounded counters to produce ${\cal T}_c$, the languages
accepted by ${\cal T}_c$ are a subset of the smallest full AFL closed under intersection with $\NCM$ containing languages in ${\cal T}$. Hence, these models augmented by counters only accept semilinear languages. Similarly, this type of technique also
works for grammar systems, as we will see in Section \ref{sec:indexed}.

In addition, in \cite{Ginsburg1971}, it was shown that
if $\LL$ is a semilinear family, then the smallest AFL containing the commutative closure of languages in $\LL$ is a semilinear AFL.
It is known that the commutative closure of every semilinear
language is in $\NCM$ \cite{eDCM}, and we know now 
that if we have a semilinear full trio $\LL$, 
then the smallest full AFL containing $\LL$
is also semilinear. So, we obtain an alternate proof that is an immediate corollary since we know that the smallest full AFL containing $\NCM$ is a semilinear full AFL.

For any semilinear full trio $\LL$ where the semilinearity and the intersection with regular language properties are effective, the membership and emptiness problems in $\LL$ are decidable. Indeed, to decide emptiness, it suffices to check if the semilinear set is empty. And to decide if a word $w$ is in $L$, one constructs the language $L \cap \{w\}$, then emptiness is decided. 
\begin{corollary}
For any semilinear full trio $\LL$ where the semilinearity and intersection with regular language properties are effective, then the membership, emptiness, and infiniteness problems are decidable for languages in $\hat{\cal F}_{\NCM}(\LL)$. In these cases,
$\hat{\cal F}_{\NCM}(\LL)$ are a proper subset of the recursive languages.
\end{corollary}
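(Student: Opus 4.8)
The plan is to derive the final corollary directly from the preceding results together with standard facts about semilinear sets. We already know from Proposition~\ref{fullAFLClosed} that $\hat{\cal F}_{\NCM}(\LL)$ is semilinear whenever $\LL$ is a semilinear full trio, and the statement of that proposition's proof makes the construction effective provided the hypotheses on $\LL$ are effective. So the first step is to observe that the Parikh image of any $L' \in \hat{\cal F}_{\NCM}(\LL)$ can be computed as an explicit finite list of linear sets: trace through the proofs of Proposition~\ref{prop1}, Lemma~\ref{lemma2} (via the Ginsburg--Spanier substitution results), and Proposition~\ref{fullAFL}, checking that each operation used --- weak codings, inverse weak codings, intersection with regular sets, substitution of the regular languages, and the semilinear-intersection/projection steps at the end of the proof of Proposition~\ref{prop1} --- is effective on semilinear representations and on the representation of $\LL$. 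The effectiveness of intersection with regular languages in $\LL$ is exactly what is assumed in the hypothesis, and effectiveness of the $\NCM$ side is classical \cite{Ibarra1978}.

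The second step handles emptiness: given a language in $\hat{\cal F}_{\NCM}(\LL)$, compute its semilinear Parikh image as above; the language is empty if and only if this finite union of linear sets is empty, which is trivially decidable (a linear set is empty iff it has no constant vector, i.e.\ iff the union is the empty union). The third step handles membership of a word $w$: using the effective closure of $\hat{\cal F}_{\NCM}(\LL)$ under intersection with regular languages --- which holds because a full AFL is closed under intersection with regular sets and the operation is effective, the singleton $\{w\}$ being regular --- form $L \cap \{w\}$, then apply the emptiness test. The fourth step handles infiniteness: a semilinear set is infinite if and only if at least one of its constituent linear sets has a nonzero period vector, again decidable from the computed representation; and a language is infinite iff its Parikh image is infinite. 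Finally, for the last sentence of the corollary, note that decidability of membership already shows $\hat{\cal F}_{\NCM}(\LL)$ consists only of recursive languages, while properness follows because $\hat{\cal F}_{\NCM}(\LL)$ is semilinear (Proposition~\ref{fullAFLClosed}) and there exist recursive languages that are not semilinear, e.g.\ $\{a^{2^n} \mid n \geq 0\}$.

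I do not expect any serious obstacle here; the corollary is essentially a bookkeeping consequence of Proposition~\ref{fullAFLClosed} plus the elementary decidability of emptiness and infiniteness for explicitly presented semilinear sets. The only point that requires a little care --- and is the natural candidate for the ``hard part'' --- is verifying that the chain of constructions behind Propositions~\ref{prop1}, \ref{fullAFL}, and \ref{fullAFLClosed} is genuinely effective end-to-end, in particular that the substitution-closure argument imported from \cite{Ginsburg1971} yields a computable semilinear representation rather than a mere existence statement, and that the number $n$ of intersections with $\NCM$s appearing in a given derivation of $\hat{L}$ is available from the input description so that the finitely-iterated full-AFL construction of Proposition~\ref{fullAFLClosed} can actually be carried out. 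Once that is checked, membership, emptiness, and infiniteness all reduce to inspecting a finite list of constant and period vectors.
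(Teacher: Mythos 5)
Your proof is correct and, for the three decidability claims, follows essentially the same route as the paper: emptiness is read off the (effectively computed) semilinear Parikh image, membership reduces to emptiness via intersection with the regular singleton $\{w\}$, and infiniteness is decided by inspecting the periods of the linear sets. Your extra care in checking that the chain of constructions behind Propositions~\ref{prop1}, \ref{fullAFL}, and \ref{fullAFLClosed} is effective end-to-end is a worthwhile elaboration of what the paper merely asserts. The one genuine divergence is in the last sentence: you establish properness of the inclusion in the recursive languages by exhibiting a recursive non-semilinear language such as $\{a^{2^n} \mid n \geq 0\}$, which cannot lie in the semilinear family $\hat{\cal F}_{\NCM}(\LL)$, whereas the paper argues that the inclusion is strict because the recursive languages are not closed under homomorphism while $\hat{\cal F}_{\NCM}(\LL)$, being a full AFL, is. Both arguments are valid; yours is arguably more elementary and self-contained (it uses only the already-established semilinearity of the family), while the paper's closure-property argument has the mild advantage of not depending on semilinearity at all and would apply to any full AFL of recursive languages.
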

As membership is decidable, the family must only contain recursive languages, and the inclusion must be strict as the recursive languages are not closed under homomorphism.

As another consequence, we provide an interesting decomposition theorem of semilinear languages into linear parts.
Consider any semilinear language $L$, where its Parikh image is a finite union of linear sets
$A_1, \ldots, A_k$, and the constant and periods for each linear set can be constructed. Then we
can effectively create languages
in perhaps another semilinear full trio separately accepting those words in 
$L_i = \{w \in L \mid \psi(w) \in A_i\}$, for each $1 \leq i \leq k$.
\begin{proposition}
Let $\LL$ be a semilinear full trio, where semilinearity is effective. Given
$L \in \LL$, we can determine representations of disjoint
simple sets (ie.\ disjoint linear sets where the periods form a basis) $A_1, \ldots, A_k$
such that the Parikh image
of L is $A = A_1\cup \cdots \cup A_k$,
and $L_i = \{ w \in L \mid \psi(w) \in A_i \} \in \hat{\cal F}_{\NCM}(\LL)$, for $1 \leq i \leq k$.
\end{proposition}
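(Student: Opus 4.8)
The plan is to exploit the fact that the semilinearity proof underlying Proposition~\ref{prop1} and Proposition~\ref{fullAFL} is effective, so that from a representation of $L \in \LL$ we can actually produce a representation of the Parikh image $A = \psi(L)$ as a finite union of linear sets, and then massage this into the desired form. First I would recall the classical fact (Ginsburg--Spanier) that any semilinear set can be effectively rewritten as a finite union of \emph{disjoint} simple sets; here a simple set is a linear set whose periods are linearly independent. So from the effective semilinear representation of $A$ we compute disjoint simple sets $A_1, \ldots, A_k$ with $A = A_1 \cup \cdots \cup A_k$. This gives the first half of the statement.

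For the second half, fix $i$ and consider $L_i = \{w \in L \mid \psi(w) \in A_i\}$. The key observation is that membership of $\psi(w)$ in a fixed linear (indeed simple) set $A_i$ can be checked by an $\NCM$: on input $w$ over $\Sigma = \{a_1,\dots,a_n\}$, the machine reads $w$, using one reversal-bounded (in fact $1$-reversal, nondecreasing then nonincreasing) counter per coordinate to record $|w|_{a_j}$, then nondeterministically guesses the multipliers $i_1, \ldots, i_l$ of the periods of $A_i$, subtracts off the constant and the chosen multiples of the periods from the counters, and accepts iff all counters reach zero. This is a standard construction showing that every semilinear set is the Parikh image of an $\NCM$ language accepting exactly the words whose Parikh image lies in that set (see~\cite{Ibarra1978}); call this language $R_i \in \NCM$. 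Then $L_i = L \cap R_i \in \LL \wedge \NCM \subseteq \hat{\cal F}_{\NCM}(\LL)$, and by Proposition~\ref{fullAFLClosed} this family is semilinear, so each $L_i$ is as claimed. Everything is effective: the $A_i$ are computed from the effective semilinearity of $\LL$, the $\NCM$ $R_i$ is constructed from $A_i$, and the intersection is an effective operation in $\hat{\cal F}_{\NCM}(\LL)$.

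The only place where care is needed — and the step I expect to be the main (minor) obstacle — is the reduction to \emph{disjoint} simple sets rather than merely disjoint linear sets: making the periods an honest basis requires the Ginsburg--Spanier normal form argument, and one must make sure this is carried out effectively on the representation produced by the effective semilinearity algorithm. Once the $A_i$ are in hand, the rest is routine: the construction of $R_i$ is the classical Parikh-image-of-an-$\NCM$ construction, and the closure of $\hat{\cal F}_{\NCM}(\LL)$ under intersection with $\NCM$ is immediate from its definition, with semilinearity of $L_i$ following from Proposition~\ref{fullAFLClosed}. Note also that the $L_i$ are automatically pairwise disjoint and their union is $L$, since the $A_i$ partition $A$.
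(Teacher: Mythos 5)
Your proposal is correct and follows essentially the same route as the paper: effective semilinearity yields the linear sets, the Ginsburg--Spanier-style normal form gives the disjoint simple sets $A_1,\ldots,A_k$, an $\NCM$ is built to accept $\psi^{-1}(A_i)$ (the paper loads the constant and guessed period multiples into the counters first and then decrements while reading, whereas you read first and subtract afterwards --- an immaterial difference), and $L_i = L \cap \psi^{-1}(A_i)$ lands in $\hat{\cal F}_{\NCM}(\LL)$ by definition. No gaps.
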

\begin{proof}
Since semilinearity is effective, we can construct a representation of
linear sets $A_1 , \ldots, A_k$. 
Moreover, it is known that given any set of constants and periods generating a semilinear set $Q$, 
it is possible to effectively construct another set of constants and periods that forms a disjoint finite union of simple sets
also generating $Q$ \cite{Flavio,Sakarovitch}. Therefore, we can assume
$A_1, \ldots, A_k$ are of this form.
An $\NCM$ $M_i$ can be created to
accept $\psi^{-1}(A_i)$, for each $i$, $1 \leq i \leq k$ as follows:
if $L\subseteq \{a_1, \ldots, a_n\}^*$, then $M_i$ has $n$ counters. If
$(x_1, \ldots, x_n)$ is the constant of $A_i$, then $M_i$ adds $x_j$ to counter $j$ for each $j$. Then, for each period, $(y_1, \ldots, y_n)$, $M_i$ nondeterministically guesses some number $c$ and adds $cy_j$ to counter $j$ for each $j$. At this point, the counters can contain any value from $A_i$. From here, for every $a_j$ read as input, $M_i$ subtracts one from counter $j$, and accepts at the end of the input if all counters are empty.
Hence, 
$L_i = L \cap L(M_i) \in  \hat{\cal F}_{\NCM}(\LL)$, for each $i$, $1 \leq i \leq k$.
\end{proof}

Therefore, by moving to a more general full trio (contained in the recursive
languages), it is possible to decompose
a language into separate (disjoint) languages such that each has
one of the linear sets as its Parikh image. 

\section{Application to General Multi-Store Machine Models}

In \cite{G75}, a generalized type of multitape automata was studied,
called multitape abstract families of automata (multitape AFAs).
We will not define the notation used there, but in Theorem 4.6.1 (and
Exercise 4.6.3), it is shown that if we have two types of automata
${\cal M}_1$ and ${\cal M}_2$ (defined using the AFA formalism), accepting language
families $\LL_1$ and $\LL_2$ respectively, then the languages accepted by automata combining
together the stores of ${\cal M}_1$ and ${\cal M}_2$, accepts exactly the family
$\hat{{\cal H}}(\LL_1 \wedge \LL_2)$.
This is shown for machines accepting full AFLs in Theorem 4.6.1 of \cite{G75}, and for union-closed full trios mentioned in
Exercise 4.6.3. We will show that this is tightly coupled with this precise definition of AFAs,
as we will define two simple types of automata where each on their own accept a semilinear family, but combining the two stores together to form one multitape model accepts non-semilinear languages.

Given a family of one-way acceptors ${\cal M}$, let ${\cal M}_c$ be those acceptors augmented by reversal-bounded counters.
A checking stack automaton ($\NCSA$) $M$ is a 
one-way $\NFA$
with a store tape, called a stack. At each
move, $M$  pushes a string (possibly $\lambda$) on the
stack, but $M$ cannot pop.
And, $M$ can enter and read from the inside of the stack in a two-way read-only fashion. But once the machine enters
the stack, it can no longer change the contents. The checking stack automaton is said to be {\em restricted} (or {\em no-read}
using the terminology of \cite{DLT2017TCS}), if it does not read from the inside of the stack until the end of the input.
We denote by $\RNCSA$ the family of machines, as well as the
family of languages described by the machines above.
A preliminary investigation of $\RNCSA_c$ was done in \cite{DLT2017TCS}.

Here, we will show the following:
\begin{enumerate}
\item $\RNCSA$ is a full trio of semilinear languages equal to the regular languages,
\item $\hat{{\cal F}}(\RNCSA \wedge \NCM)$ and $\hat{{\cal F}}_{\NCM}(\RNCSA )$
are semilinear full AFLs,
\item every language in $\RNCSA \wedge \NCM$ is accepted by some machine in
$\RNCSA_c$,
\item there are non-semilinear languages accepted by machines
in $\RNCSA_c$.
\end{enumerate}
Therefore, $\RNCSA_c$ contains some languages not in the smallest full AFL containing $\RNCSA$ closed under intersection with $\NCM$,
and the multitape automata and results from \cite{G75} and \cite{Harju2002278} do not apply to this type of automata.

\begin{proposition} \label{prop-reg}
$\RNCSA$ accepts exactly the regular languages, which is a
full trio of semilinear languages.
\end{proposition}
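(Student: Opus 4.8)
The plan is to prove the two inclusions separately. The easy direction is that every regular language is accepted by some $\RNCSA$: a restricted checking stack automaton can simply ignore its stack entirely (pushing $\lambda$ on every move, never entering the stack), behaving exactly as the underlying $\NFA$. Hence the regular languages are contained in $\RNCSA$. The substantive direction is the converse: every language accepted by a restricted checking stack automaton is regular.

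For the hard direction, I would argue that the stack contents are useless to a restricted machine. Let $M$ be an $\RNCSA$. During the first phase — while reading input — $M$ writes onto the stack but never consults it, so the stack contents have no influence whatsoever on which transitions are available or on the state reached when the input is consumed. In the second phase, after the input is exhausted, $M$ may move two-way over the (now frozen) stack and must reach an accepting configuration. The key observation is that whether this second phase succeeds depends only on the current state of $M$ and on the string currently on the stack, and moreover, since no input is read in the second phase, this is just the question of whether a two-way read-only finite automaton (equivalently, by the classical result that 2DFAs/2NFAs accept only regular languages, a one-way finite automaton) started in that state accepts that stack string. So I would define, for each state $q$ of $M$, the set $S_q$ of stack strings on which the second phase can succeed starting from $q$; each $S_q$ is regular. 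Then construct an $\NFA$ $M'$ that simulates the input-reading phase of $M$ directly, but in parallel tracks — using a finite-state device — enough information about the stack string being written to decide membership in $S_q$ for the appropriate $q$. Concretely, since each $S_q$ is regular, run the (finite) product of the deterministic automata for all the $S_q$ alongside the simulation of $M$'s first phase, feeding it the same symbols $M$ would push; $M'$ accepts when $M$ finishes the input in state $q$ and the tracked stack string lies in $S_q$.

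The main obstacle is making the second-phase analysis rigorous: one must verify that $S_q$ is genuinely regular and that no interaction between the two phases is lost. The cleanest route is to note that the second phase is literally a computation of a two-way nondeterministic finite automaton on a fixed tape (the stack contents, with appropriate endmarkers), with designated start state $q$; by the Shepherdson/Rabin–Scott theorem such a device accepts a regular set of tapes, so $S_q$ is regular and a DFA for it is effectively constructible. One also has to handle the minor bookkeeping that $M$ pushes strings, not single symbols, on each move — but feeding a string into a DFA is still a finite-state operation, so this is absorbed into the product construction without difficulty. Finally, semilinearity of the family is immediate once equality with the regular languages is established, since the regular languages are effectively semilinear (indeed letter-equivalent to themselves), and $\RNCSA$ being a full trio follows either from the equality with the regular languages or is already standard.
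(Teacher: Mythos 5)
Your proof is correct, and it reaches the same key ingredient as the paper --- the Shepherdson/Rabin--Scott theorem that two-way finite automata on a fixed tape accept only regular sets --- but packages the argument differently. The paper builds a single two-way $\NFA$ over the interleaved alphabet $\Sigma \cup \Gamma$, reading a string $u_1 v_1 \cdots u_n v_n$ that records both the input symbols consumed and the blocks pushed on the stack; it verifies the writing phase left-to-right, simulates the two-way stack-reading phase by moving over the $\Gamma$-segments while skipping the $\Sigma$-segments, and finally erases $\Gamma$ by a homomorphism (using closure of the regular languages under homomorphism). Your construction instead cleanly separates the two phases: you isolate, for each state $q$, the regular set $S_q$ of stack contents from which the reading phase can succeed, and then run the product of DFAs for the $S_q$ alongside a direct one-way simulation of the writing phase, feeding the pushed symbols into that product. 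What your route buys is a direct one-way $\NFA$ for $L(M)$ with no final homomorphic erasure and no need for the two-way automaton to hop over input segments; what the paper's route buys is brevity, since all the bookkeeping (including the fact that whole strings, not single symbols, are pushed per move) is absorbed into the single interleaved encoding. Both correctly dispose of the one point that needs care --- that in the restricted model the writing phase never consults the stack interior and the reading phase consumes no input, so the two phases interact only through the state reached at end-of-input and the string deposited on the stack.
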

\begin{proof}
It is clear that all regular languages are in $\RNCSA$.
For the other direction, take an $\RNCSA$ machine $M$, and assume
without loss of generality that the input alphabet $\Sigma$ and
the stack alphabet $\Gamma$ are disjoint. Construct
a two-way NFA ($2\NFA$) $M'$
over $(\Sigma \cup \Gamma)^*$ whose input is divided into segments
$u_1 v_1 \cdots u_n v_n$, where  $u_i \in (\Sigma \cup \{\lambda\})$
and  $v_i \in \Gamma^*$.
$M'$ simulates $M$ by first verifying that $M$, when reading $u_i$,
writes  $v_i$ on the stack. When $M'$ simulates the two-way read-only phase of $M$
(which only occurs in $M$ after reaching the end of the input),
it does so by using the two-way $\NFA$ and skipping over the segments of $\Sigma$.
Since this language accepted by the $2\NFA$ $M'$ is regular,
the language obtained by erasing all letters of $\Gamma$ via
homomorphism is also regular, which is exactly $L(M)$.

\end{proof}

From Proposition \ref{fullAFL}, the following is true:
\begin{corollary}
$\hat{{\cal F}}(\RNCSA \wedge \NCM) =\hat{{\cal F}}_{\NCM}(\RNCSA) = \hat{{\cal F}}(\NCM)$
is a semilinear full AFL.
\end{corollary}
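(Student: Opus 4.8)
The plan is to reduce the whole statement to Proposition \ref{prop-reg} together with standard closure facts for $\NCM$ and the general results already proved. First I would use Proposition \ref{prop-reg} to replace $\RNCSA$ everywhere by the family of regular languages, which is a semilinear full trio. The key identity to establish is then $\RNCSA \wedge \NCM = \NCM$: the inclusion $\NCM \subseteq \RNCSA \wedge \NCM$ holds because $L = \Sigma^* \cap L$ for every $L \in \NCM$ over an alphabet $\Sigma$ and $\Sigma^*$ is regular, while the reverse inclusion holds because $\NCM$ is closed under intersection with regular languages \cite{Ibarra1978}. Hence $\hat{\cal F}(\RNCSA \wedge \NCM) = \hat{\cal F}(\NCM)$, and this is a semilinear full AFL either directly by Lemma \ref{lemma2} applied to the semilinear full trio $\NCM$, or by Proposition \ref{fullAFL} applied to the regular languages.

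It remains to show $\hat{\cal F}_{\NCM}(\RNCSA) = \hat{\cal F}(\NCM)$. One inclusion is completely general: for any family $\LL$, the family $\hat{\cal F}_{\NCM}(\LL)$ is a full AFL containing $\LL$ and closed under intersection with $\NCM$, so it contains $\LL \wedge \NCM$, and being a full AFL it contains $\hat{\cal F}(\LL \wedge \NCM)$; taking $\LL = \RNCSA$ and using the previous paragraph gives $\hat{\cal F}(\NCM) \subseteq \hat{\cal F}_{\NCM}(\RNCSA)$. For the reverse inclusion I would observe that $\NCM$ is itself a full AFL and is closed under intersection (in particular, under intersection with $\NCM$) \cite{Ibarra1978}, so $\hat{\cal F}(\NCM) = \NCM$ is a full AFL that contains the regular languages $= \RNCSA$ and is closed under intersection with $\NCM$; by the minimality built into the definition of $\hat{\cal F}_{\NCM}(\cdot)$, this yields $\hat{\cal F}_{\NCM}(\RNCSA) \subseteq \hat{\cal F}(\NCM)$. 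Combining the two inclusions gives the full chain of equalities, and semilinearity has already been noted.

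The argument is essentially bookkeeping, so there is no serious obstacle; the only points requiring care are the standard closure properties of $\NCM$ — that it absorbs intersections with regular languages and with other $\NCM$ languages, and that it is closed under union, concatenation, and Kleene-$*$ — since these are precisely what collapses $\RNCSA \wedge \NCM$ to $\NCM$ and what justifies $\hat{\cal F}(\NCM) = \NCM$ in the second equality.
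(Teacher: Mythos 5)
Your first paragraph is correct and is essentially the route the paper intends: Proposition \ref{prop-reg} identifies $\RNCSA$ with the regular languages, $\RNCSA \wedge \NCM = \NCM$ because $\NCM$ is closed under intersection with regular languages (and $L = \Sigma^* \cap L$), and semilinearity of $\hat{\cal F}(\NCM)$ follows from Lemma \ref{lemma2} or Proposition \ref{fullAFL}. The inclusion $\hat{\cal F}(\NCM) \subseteq \hat{\cal F}_{\NCM}(\RNCSA)$ at the start of your second paragraph is also fine.

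The gap is in your reverse inclusion, where you assert that $\NCM$ is a full AFL, so that $\hat{\cal F}(\NCM) = \NCM$, and then invoke minimality of $\hat{\cal F}_{\NCM}(\RNCSA)$. But $\NCM$ is not closed under Kleene-$*$, hence is not a full AFL, and $\NCM \subsetneq \hat{\cal F}(\NCM)$. A standard witness is $\{a^n b^n \mid n \geq 1\}^*$, where each block forces a fresh counter reversal so no fixed reversal bound suffices; more decisively, the paper itself notes that $\hat{\cal H}(\hat{\cal F}(\NCM) \wedge \hat{\cal F}(\NCM))$ equals the recursively enumerable languages, and since $\NCM$ \emph{is} closed under intersection and homomorphism, $\NCM = \hat{\cal F}(\NCM)$ would force $\NCM$ to be the r.e.\ family, which it is not (it is semilinear). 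Once $\hat{\cal F}(\NCM) \neq \NCM$, your minimality argument instead requires that $\hat{\cal F}(\NCM)$ be closed under intersection with $\NCM$, i.e., $\hat{\cal F}(\NCM) \wedge \NCM \subseteq \hat{\cal F}(\NCM)$. That is not one of the routine closure facts you list; it is exactly the question of whether the iterated hierarchy $\hat{\cal F}(\hat{\cal F}(\LL \wedge \NCM) \wedge \NCM), \ldots$ collapses at the first level, which the paper explicitly leaves open after Proposition \ref{fullAFL}. So your argument establishes the first equality, the inclusion $\hat{\cal F}(\NCM) \subseteq \hat{\cal F}_{\NCM}(\RNCSA)$, and semilinearity of both families (the latter via Proposition \ref{fullAFLClosed}), but not the second equality as stated.
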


Since $\RNCSA$ is equal to the family of regular languages, and $\NCM$ is
closed under intersection with regular languages, the following
is true:
\begin{proposition}
$\hat{{\cal F}}_{\NCM}(\RNCSA) = \NCM \subsetneq  \RNCSA_c$.
Furthermore, the latter family contains non-semilinear languages.
\label{non-semi}
\end{proposition}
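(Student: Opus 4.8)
The plan is to dispatch the equality in a line and put essentially all the work into the strict inclusion. For $\hat{\cal F}_{\NCM}(\RNCSA)=\NCM$: by Proposition \ref{prop-reg}, $\RNCSA$ is the family of regular languages, so $\RNCSA\subseteq\NCM$ and already $\hat{\cal F}_{\NCM}(\RNCSA)\supseteq\NCM$ (take $\Sigma^{*}\cap L(M)$ for an arbitrary $\NCM$ $M$); for the reverse inclusion one uses that $\NCM$ is closed under the full AFL operations and under intersection with $\NCM$ (a product machine carrying the disjoint union of the two counter sets is again reversal-bounded), so $\hat{\cal F}_{\NCM}(\RNCSA)\subseteq\hat{\cal F}_{\NCM}(\NCM)=\NCM$. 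This is also consistent with the preceding corollary, which identifies this family with $\hat{\cal F}(\NCM)$, since $\NCM$ is a full AFL.

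For the strict inclusion, $\NCM\subseteq\RNCSA_c$ is immediate: an $\NCM$ is an $\RNCSA_c$ that pushes $\lambda$ at every step and never enters its stack. To obtain strictness together with the last sentence of the statement, I would exhibit one non-semilinear language accepted by a machine in $\RNCSA_c$. A convenient candidate is $L=\{\,a^{n}b^{n^{2}}\mid n\ge 0\,\}$: its Parikh image $\{(n,n^{2})\mid n\ge 0\}$ is not a finite union of linear subsets of $\mathbb{N}^{2}$ (any linear set meeting the graph of $n\mapsto n^{2}$ in infinitely many points would also contain points off that graph, as $n^{2}$ is not ultimately affine), so $L$ is not semilinear; and since every $\NCM$ language is semilinear \cite{Ibarra1978}, $L\notin\NCM$. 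Hence $\NCM\subsetneq\RNCSA_c$ and $\RNCSA_c$ contains non-semilinear languages.

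It remains to construct a machine $M$ in $\RNCSA_c$ accepting $L$. The finite control first verifies that the input has the form $a^{*}b^{*}$. While scanning the $a$-block, $M$ pushes one stack symbol per $a$ read and increments a counter $N_{1}$ per $a$; while scanning the $b$-block it pushes $\lambda$ and increments a counter $N_{2}$ per $b$. When the input is exhausted — and only then, so the no-read restriction is respected — the stack contents form a block of length $n$ between the two stack end markers, $N_{1}=n$, and $N_{2}$ equals the number of $b$'s. Now $M$ enters the stack and performs rounds, one round per unit it removes from $N_{1}$ (hence exactly $n$ rounds): each round sweeps the stack head once from one end marker to the other across the length-$n$ block, decrementing $N_{2}$ by one at each cell, and $M$ rejects if $N_{2}$ reaches $0$ before a sweep is finished. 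When $N_{1}$ hits $0$, $M$ accepts iff $N_{2}=0$; since each round subtracts exactly $n$ from $N_{2}$ and there are $n$ rounds, this happens iff the number of $b$'s is $n^{2}$, i.e.\ iff the input is in $L$. Each of $N_{1}$ and $N_{2}$ is incremented only during the input phase and decremented only afterwards, so both make a single reversal, while the stack read head reverses $\Theta(n)$ times, which is permitted because checking-stack read heads carry no reversal bound.

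The only delicate point — and the main obstacle — is precisely this reversal bookkeeping: the nonlinear count $n^{2}$ must be certified while every reversal-bounded counter reverses only $O(1)$ times, which is why the $n$-fold iteration has to be driven by the unbounded two-way motion of the stack head rather than by recycling a counter. Once $M$ is in hand, the remaining pieces — the opening equality, $\NCM\subseteq\RNCSA_c$, and reading off strictness and non-semilinearity from the fact that $\NCM$ consists of semilinear languages — are routine.
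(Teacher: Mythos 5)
Your proof is correct and follows essentially the same route as the paper: the equality falls out of Proposition~\ref{prop-reg} together with $\NCM$ being a full AFL closed under intersection with $\NCM$, the containment $\NCM\subseteq\RNCSA_c$ is immediate, and strictness is witnessed by a non-semilinear language accepted via repeated two-way sweeps of the checking stack against $1$-reversal counters. The only difference is the choice of witness ($\{a^nb^{n^2}\}$ versus the paper's $\{a^ib^j\mid i,j\ge 1,\ i\mbox{ divides }j\}$), which is immaterial since the machine construction is the same sweep-and-decrement idea.
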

\begin{proof}
Containment is immediate since $\RNCSA_c$ has reversal-bounded counters.
The non-semilinear 
$L = \{a^i b^j \mid i, j \ge 1, j \mbox{~is divisible by~} i\}$
can be accepted by an $\RDCSA_c$ $M$ with one counter that makes
only one reversal.
$M$, on input $x$ 
checks that $x = a^ib^j$ for some $i, j \ge 1$, copies $a^i$ onto the stack, and
increments the counter to $j$. Then $M$ makes multiple 
left-to-right and right-to-left sweeps on $a^i$ with the stack while in parallel
decrementing the counter to check that $j$ is divisible by $i$.
\end{proof}

It is concluded that $\RNCSA_c$ contains some languages
not in $\hat{{\cal F}}_{\NCM}(\RNCSA) = \NCM$, since $\NCM$ is semilinear. Then it is clear that combining together the stores
of $\RNCSA$ and $\NCM$ accepts significantly more than
$\hat{{\cal H}}(\RNCSA \wedge \NCM)$ as is the case for
multitape AFA \cite{G75}.
The reason for the discrepancy between
this result and Ginsburg's result is that the definition
of multitape AFA allows for reading the input while performing 
instructions (like operating in two-way read-only mode in the stack).
In contrast, $\RNCSA$ does not allow this behavior. And if
this behavior is added into the definition, the full capability
of checking stack automata is achieved which accepts non-semilinear languages.

A similar analysis can be done using the method developed in \cite{Harju2002278} for augmenting the machine models
with counters.
Let $\cal M$ be a family of  one-way acceptors with some type of 
store structure $X$.
For example, if the storage $X$  is a pushdown stack, then $\cal M$ is
the family
of nondeterministic pushdown automata ($\NPDA$s). 
In \cite{Harju2002278},
the following was shown for many families $\cal M$:

\begin{enumerate}
\item[(*)]
If $\cal M$ is a semilinear family (i.e, the languages
accepted by the machines in $\cal M$ have a semilinear Parikh image), then
${\cal M}_c$ is also a semilinear family.
\end{enumerate}
It was not clear in \cite{Harju2002278} whether the result above is true for all types of one-way
acceptors, in general or for which types (*) holds. However, the family $\RNCSA$ (equal to the regular languages) is semilinear
(Proposition \ref{prop-reg}), but $\RDCSA_c$ is not semilinear
(Proposition \ref{non-semi}).

\section{Properties of semilinear language families}
\label{closure}

This section investigates certain properties of semilinear language families.

\begin{definition}
Given a language family $\LL$, define the following families:
\begin{eqnarray*}
\overline{\LL} &=& \{ \overline{L}  \mid  L \in \LL\},\\
\LL_D &=& \{L_1 - L_2  \mid  L_1, L_2 \in \LL\},\\
\LL_{\cup} &=& \{L_1 \cup L_2 \mid L_1, L_2 \in\LL\},\\
\LL_{\cap} &=& \{L_1 \cap L_2 \mid L_1, L_2 \in \LL\},\\
\LL\LL &=& \{L_1L_2  \mid  L_1,L_2 \in \LL\},\\
\LL^* &=& \{L^* \mid L \in \LL\},\\
\LL_{RQ} & =& \{L_1L_2^{-1}  \mid L_1, L_2 \in \LL\}, \mbox{~(right quotient)},\\
\LL_{LQ} & =& \{L_1^{-1}L_2  \mid L_1, L_2 \in \LL\}, \mbox{~(left quotient)},\\
{\cal H}(\LL) &=& \{h(L) \mid L \in \LL, h \mbox{~a homomorphism}\},\\
{\cal H}^{-1}(\LL) &=& \{h^{-1}(L) \mid L \in \LL, h \mbox{~a homomorphism}\}.
\end{eqnarray*}
\end{definition}

If $\LL$ is semilinear, an interesting question is whether the defined families above
must also be semilinear.
In \cite{Ginsburg1971}, it is shown that the substitution of one semilinear family into another is again semilinear. This immediately implies that if $\LL$ is a semilinear family, then all of $\LL^*, \LL_{\cup},\LL\LL$, and
${\cal H}(\LL)$ are also semilinear.
For the remaining properties, we have not seen proofs in the literature, and therefore include short proofs here.

\begin{proposition}  If $\LL$ is semilinear,  then all of the following need not be semilinear: $\overline{\LL}, \LL_D, \LL_{RQ}, \LL_{LQ}, \LL_{\cap}, {\cal H}^{-1}(\LL)$.
\end{proposition}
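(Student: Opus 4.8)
The plan is to exhibit, for each of the six closure operations, a concrete semilinear family $\LL$ together with a witness language in the resulting family that is not semilinear. A single well-chosen family $\LL$ should handle all six cases at once, which keeps the argument compact. The natural candidate is a finitely generated full trio sitting inside $\NCM$ or $\CFREEL$ — I would take $\LL$ to be $\NCM$ itself (or, if one wants the smallest possible witness, the full trio $\hat{\cal T}(\{a^n b^n \mid n \ge 0\})$, i.e.\ essentially the one-counter languages). Both are effectively semilinear: $\NCM$ by \cite{Ibarra1978}, and any full trio generated by a context-free language by \cite{Parikh}. The point is that these families are \emph{not} closed under complement, difference, intersection, quotient, or inverse homomorphism in a way that preserves semilinearity, and the classical non-semilinear separating examples can be written as combinations of two semilinear languages.

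The key steps, in order. First, for $\LL_{\cap}$: recall the standard example $L_1 = \{a^i b^j c^k \mid i = j\}$ and $L_2 = \{a^i b^j c^k \mid j = k\}$, both in $\NCM$ (one $1$-reversal counter each), whose intersection is $\{a^i b^i c^i \mid i \ge 0\}$ — semilinear, so I must instead use a genuinely non-semilinear intersection. Better: take $L_1 = \{ a^i b^j \mid j \text{ divisible by } i\} \cup (\text{padding})$ — but divisibility is not in $\NCM$. The clean route is to use the fact, already invoked in the excerpt via \cite{G75}, that $\hat{\cal H}(\hat{\cal F}(\{a^nb^n\}) \wedge \hat{\cal F}(\{a^nb^n\}))$ is all of $\RE$; so there exist $L_1, L_2$ in the \emph{full AFL} generated by $\{a^nb^n\}$ with $h(L_1 \cap L_2)$ non-semilinear — but for $\LL_{\cap}$ with $\LL$ a full trio I want $L_1 \cap L_2$ itself non-semilinear with $L_1, L_2 \in \LL$. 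For $\LL = \NCM$ this is concretely available: encode two-counter machine (Minsky machine) computations. A single $\NCM$ can check that a string encodes a sequence of configurations with the counter-$1$ updates consistent; another $\NCM$ checks the counter-$2$ updates; their intersection is the set of valid halting computations of a fixed universal counter machine, whose projection — and indeed the set itself, suitably encoded over $\{a,b\}$ with Parikh image tracking counter values — is non-semilinear. So $L_1 \cap L_2$ is non-semilinear with $L_1,L_2\in\NCM$, giving $\LL_{\cap}$. Second, $\overline{\LL}$: since $\NCM$ is not closed under complement and the complement of an $\NCM$ language can be non-semilinear — take $\overline{L_1 \cap L_2}$? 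No; instead observe that $\LL_{\cap}$ reduces to $\overline{\LL}$ and $\LL_\cup$: $L_1 \cap L_2 = \overline{\overline{L_1} \cup \overline{L_2}}$, but $\overline{L_i}\notin\NCM$ in general. Cleaner: directly, $\NCM$ contains $L = \{a^i b^j \mid i \neq j\}$-type languages whose complements over richer alphabets are non-semilinear; concretely, let $M$ be an $\NCM$ accepting all strings that are \emph{not} valid counter-machine computations (it guesses a defect and verifies it with counters), so $\overline{L(M)}$ is the set of valid computations, non-semilinear. Third, $\LL_D$: immediate from $\overline{\LL}$ since $\overline{L} = \Sigma^* - L$ and $\Sigma^* \in \NCM$, so $\overline{\LL} \subseteq \LL_D$, done. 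Fourth, ${\cal H}^{-1}(\LL)$: inverse homomorphism alone cannot create non-semilinearity from a semilinear language, since $\NCM$ and all full trios are closed under inverse homomorphism and $h^{-1}$ preserves semilinearity of Parikh images up to a linear map — so actually ${\cal H}^{-1}(\LL)$ \emph{is} semilinear when $\LL$ is a semilinear full trio. This means the proposition as stated cannot intend $\LL$ to range over full trios for this entry; I would instead take $\LL$ to be a non-full-trio semilinear family, e.g.\ $\LL = \{L\}$ a singleton or a family not closed under inverse homomorphism — but every finite language is semilinear and $h^{-1}$ of a finite language can be infinite yet is always regular, hence semilinear. The genuine example must use an infinite semilinear $L$ whose $h$-preimage is non-semilinear: take $L = a^* \subseteq \{a\}^*$, and... $h^{-1}(a^*)$ is semilinear. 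The resolution: pick $\LL$ containing a single semilinear but \emph{non-context-free} language $L$ over a large alphabet and a homomorphism $h$ such that $h^{-1}(L)$ restricted appropriately is non-semilinear — this is where I expect the real work, so I flag it below. Fifth and sixth, $\LL_{RQ}$ and $\LL_{LQ}$: quotient by a single language can be very powerful; I would use that $L_2^{-1}$-quotient of a semilinear language by another semilinear language can encode intersection. Indeed $\{x : xw \in L_1 \text{ for some } w \in L_2\}$-type definitions let one "select" strings; using the disjointness trick (put $L_1$ and $L_2$ over disjoint alphabets and use a marker), $L_1 L_2^{-1}$ with suitable $L_1, L_2 \in \NCM$ over a marked alphabet can be made to equal, in its Parikh image, the non-semilinear set from the counter-machine construction.

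The main obstacle is the ${\cal H}^{-1}(\LL)$ case: inverse homomorphism is "tame" and preserves semilinearity for any family closed under it, so the witnessing semilinear family $\LL$ must be deliberately chosen to be small and badly behaved (not a full trio, not closed under $h^{-1}$). The cleanest fix I would pursue: let $N$ be a fixed non-semilinear language that is the image $N = g(K)$ of a semilinear language $K$ under a homomorphism $g$ (such $K$ exists because $\hat{\cal H}(\NCM) = \RE$, so every r.e.\ — in particular every recursive non-semilinear — language is a homomorphic image of an $\NCM$ language, hence of a semilinear language over a suitable alphabet). Now one engineers $L$ and $h$ so that $h^{-1}(L)$ "is" $K$ up to a length-preserving recoding: roughly, take $L$ to be $K$ itself padded so that the homomorphism $h$ that deletes the padding has $h^{-1}(L) \supseteq K'$ for a non-semilinear $K'$ — but $h^{-1}$ \emph{adds} symbols, it does not delete, so one must instead start from a semilinear $L$ and add symbols via $h^{-1}$ in a way that forces a non-semilinear Parikh relation; since adding symbols can only take the Parikh image to a preimage under a projection, which keeps semilinearity, I now believe ${\cal H}^{-1}$ genuinely \emph{does} preserve semilinearity and the proposition's inclusion of ${\cal H}^{-1}(\LL)$ is either an error or relies on a non-standard convention (e.g.\ allowing $h$ to be a substitution, or $\LL$ a single non-recursive set). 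I would resolve this by checking the authors' proof; absent that, my proposal for that entry is to flag it and supply the other five rigorously, noting that ${\cal H}^{-1}$ of a \emph{semilinear family that is not closed under inverse homomorphism} can fail to be semilinear only if the family already fails to be closed under $h^{-1}$ in the strong sense, for which a bespoke artificial family $\LL = \{L_0\}$ with $L_0$ semilinear but $h^{-1}(L_0)$ not — constructed by diagonalization against semilinear sets — must be exhibited. This bespoke construction is the one step I would expect to require genuine care rather than routine citation.

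In summary: the skeleton is "one family ($\NCM$, or the one-counter full trio) kills $\LL_\cap, \overline{\LL}, \LL_D, \LL_{RQ}, \LL_{LQ}$ via a counter-machine-computation encoding plus the disjoint-alphabet marker trick for the quotients and the complement-of-invalid-computations trick for $\overline{\LL}$", and "${\cal H}^{-1}(\LL)$ needs a separate, smaller, hand-built family and is the delicate point". I would write the five easy cases first, centralize the non-semilinear witness $C = \{$encoded halting computations of a fixed $2$-counter machine$\}$ as a lemma, then tackle ${\cal H}^{-1}$ last.
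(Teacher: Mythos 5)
Your proposal has two genuine gaps and misses the single trick that makes the paper's proof uniform. The unifying idea is camouflage by letter-equivalence: take a non-semilinear (indeed non-recursively-enumerable) unary language $L \subseteq a^*$ and set $L' = cLd \cup da^*c$. Then $L'$ is semilinear because its Parikh image coincides with that of the regular language $da^*c$, yet intersecting with $ca^*d$, right-quotienting by $d$, left-quotienting by $c$, or applying $h^{-1}$ for a suitable $h$ strips off the camouflage and exposes the non-semilinear core $L$. Your treatment of the quotients gestures at a ``disjoint-alphabet marker trick'' but never produces this witness, and your $\LL_{\cap}$ route is actually broken: $\NCM$ is closed under intersection (run the two reversal-bounded machines in product; the combined counters are still reversal-bounded), so $\NCM \wedge \NCM = \NCM$ is semilinear, and moreover a single $\NCM$ cannot check counter-update consistency across unboundedly many configurations of a two-counter machine, since that needs unboundedly many reversals. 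The full-AFL result you cite ($\hat{\cal H}(\hat{\cal F}(\{a^nb^n\}) \wedge \hat{\cal F}(\{a^nb^n\})) = \mathrm{RE}$) needs Kleene-star closure before intersecting and does not give two $\NCM$ languages with non-semilinear intersection. The correct $\LL_{\cap}$ witness is simply $L' \cap ca^*d = cLd$. (Your $\overline{\LL}$ and $\LL_D$ arguments are essentially right and match the paper, which uses the cleaner witness $L=\{a\#a^2\#\cdots\#a^k\#\mid k\ge 1\}$ whose complement is an $\NCM$ language; your version would additionally need to argue that the Parikh image, not merely the language, of the set of valid computations is non-semilinear.)

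The second gap is your conclusion that ${\cal H}^{-1}$ preserves semilinearity and that the proposition is in error there. Your reasoning assumes $\psi(h^{-1}(L))$ is the full preimage of $\psi(L)$ under the induced linear map on Parikh vectors; in fact it is only a subset, and a subset of a semilinear set need not be semilinear. Your intuition is correct only for weak codings ($|h(a)|\le 1$), a point the paper itself makes immediately after the proposition. For a general homomorphism, $h^{-1}$ can decode positional information: with $h(b)=ca$, $h(e)=a$, $h(f)=d$, $h(g)=ac$, one gets $h^{-1}(L') = \{be^{n-1}f \mid a^n\in L\} \cup \{fe^{n-1}g \mid n\ge 1\}$, so the two halves of $L'$ land on different letters, the letter-equivalence that made $L'$ semilinear is destroyed, and the result is non-semilinear. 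So the statement is correct as written, and the ${\cal H}^{-1}$ case is not the delicate outlier you took it to be --- it falls to the same $L'$ as the other four.
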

\begin{proof}
First, it will be shown for $\overline{\LL}$.
Let  $L = \{a^1 \# a^2 \# \cdots \# a^k \#  \mid  k \ge 1 \}$ where $a$ is a letter.
Then the complement of $L$, $\overline{L}$ can easily be accepted  by
an $\NCM$ with one 1-reversal counter which, when given
an input $w$,  nondeterministically selects (1) or (2) below:
\begin{enumerate}
\item accepts, if  $w$ is not in a valid format, i.e., not of the form $(a^+ \#)^+$.
(M does not need the counter.)
\item accepts $w$ if it is of the form $a^{i_1} \# \cdots \# a^{i_k}$  but $i_r +1 \ne i_{r+1}$
for some $r$.  ($M$ uses a 1-reversal counter.)
\end{enumerate}
Since all $\NCM$ languages are semilinear, $\overline{L}$ is semilinear, but $L$ is not semilinear (if it were semilinear, then projecting onto $a$ would be semilinear, but all unary semilinear languages are regular \cite{harrison1978} and this language is not regular by the pumping lemma).
This also implies non-closure for $\LL_D$.

Next, for right quotient, it is known that there is a non-recursively enumerable unary language $L \subseteq a^*$ (that is not semilinear) \cite{Minsky}. Let $L' = cL d \cup d a^* c$. Then $L'$ is semilinear since it has the same Parikh image as the regular language $da^*c$. But the right quotient of $L'$ with $d$ is $cL$, which is not semilinear. 

Similarly, the left quotient of $L'$ with $c$ is $Ld$, which is not semilinear. The result for intersection is also similar.

For inverse homomorphism, take a homomorphism $h$ that maps $b$ to $ca$, $e$ to $a$, and $f$ to $d$, and $g$ to $ac$. Then
$h^{-1} (L') = b L'' f \cup f a^* g$ where $L'' = L a^{-1}$. The language $h^{-1}(L')$ is clearly not semilinear.

\end{proof}

In contrast, it can be seen that for inverse homomorphisms where the homomorphisms are {\em weak codings} (that is, $|h(a)| \leq 1$ for all $a \in \Sigma$), then the resulting family
is semilinear, as inverse homomorphisms act just as substitutions (as mentioned, the
substitution closure of a semilinear family is semilinear) with the additional arbitrary insertion of characters erased by $h$ (which can be added in by placing another period in each linear set of the semilinear set with all $0$'s except for a $1$ for the position of the character erased by the homomorphism).

These closure properties will motivate the next notion that can help define ``well-behaved'' semilinear
languages.
\begin{definition}
A semilinear language $L$ is {\em well-behaved} if $\hat{\cal T}(L)$ is semilinear;
that is, it is well-behaved if closing it under all full trio operations only give
semilinear languages.
\end{definition}

Some basic facts are in order.
\begin{proposition}
The following are true:
\begin{itemize}
\item if $L \in \LL$, a semilinear full trio, then $L$ is well-behaved,
\item not all semilinear languages are well-behaved.
\end{itemize}
\end{proposition}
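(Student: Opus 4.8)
The plan is to establish the two bullet points separately, the first by a direct appeal to the earlier results and the second by exhibiting a witness.

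For the first bullet, suppose $L \in \LL$ where $\LL$ is a semilinear full trio. Since $\LL$ is a full trio, we have $\hat{\cal T}(L) \subseteq \LL$ (the smallest full trio containing $L$ is contained in any full trio containing $L$), and since $\LL$ is semilinear, every language in $\hat{\cal T}(L)$ is semilinear. Hence $\hat{\cal T}(L)$ is semilinear, so $L$ is well-behaved by definition. This is essentially immediate; no real calculation is needed.

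For the second bullet, the task is to produce a single semilinear language $L$ such that $\hat{\cal T}(L)$ is \emph{not} semilinear. The natural witness is already present in the proof of the earlier non-closure proposition: take a non-semilinear (even non-recursively-enumerable) unary language $K \subseteq a^*$, which exists by \cite{Minsky}, and set $L = c K d \cup d a^* c$. As observed earlier in the excerpt, $L$ is semilinear because it is letter-equivalent to the regular language $d a^* c$ (its Parikh image records only the multiset of letters, and inserting $cKd$ contributes the same Parikh vectors as $d a^* c$). However, a single right-quotient by the letter $d$ --- which is a full trio operation, since $L \mapsto L \{d\}^{-1}$ can be realized by an inverse homomorphism, intersection with a regular set, and a homomorphism, or one simply cites that full trios are closed under right quotient by a single letter / by regular languages --- yields $c K$, whose Parikh image is $\{1\} \times \psi(K)$, which is not semilinear since $\psi(K)$ is not. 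Hence $\hat{\cal T}(L)$ contains the non-semilinear language $cK$, so $L$ is not well-behaved.

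The only point requiring a little care is justifying that right quotient by a single terminal letter (or, more safely, by a regular language) is a full trio operation, so that $cK \in \hat{\cal T}(L)$; this is standard (full trios are closed under right and left quotient by regular sets, via the marker/inverse-homomorphism construction), and I would just cite \cite{G75} or give the one-line inverse-homomorphism argument. I expect this to be the main --- and only mild --- obstacle; everything else is a direct reuse of the semilinearity argument for $L = cKd \cup da^*c$ already given in the excerpt.
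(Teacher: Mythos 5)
Your proposal is correct and follows essentially the same approach as the paper: pad a non-recursively-enumerable (hence non-semilinear) unary language with a letter-equivalent regular language to make the union semilinear, then strip the padding with a full trio operation to recover a non-semilinear language in $\hat{\cal T}(L)$. The only difference is that the paper strips the padding by intersecting with a regular language (one of the three defining full trio operations, e.g.\ $L' \cap b a^* c = bLc$), which avoids the extra (standard, but not definitional) lemma that full trios are closed under right quotient by a word that your version relies on.
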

\begin{proof}
The first property is immediate since $\hat{\cal T}(L) \subseteq \LL$.

Consider a non-recursively enumerable unary language $L \subseteq a^*$. Therefore,
this language is not semilinear (all unary semilinear languages are regular).
Consider $L' = b L c \cup c a^* b$. Then $L'$ is semilinear since it has
the same Parikh image as the regular language $c a^* b$. But $L' \cap b a^* c = bLc$,
which is not semilinear. Thus, the closure of $L'$ by intersection with regular languages
gives non-semilinear languages.\\

\end{proof}

Consider the following language family:
$$\LL_{WB} = \{L \mid L \mbox{~is well-behaved.}\}.$$

\begin{proposition}
$\LL_{WB}$ is the largest semilinear full trio. That is, all semilinear full trios are contained in $\LL_{WB}$.
\end{proposition}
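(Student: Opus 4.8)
The plan is to show two things: first, that $\LL_{WB}$ is itself a semilinear full trio, and second, that it contains every semilinear full trio. The second part is essentially immediate from the earlier observations: if $\LL$ is a semilinear full trio and $L \in \LL$, then $\hat{\cal T}(L) \subseteq \LL$ (since $\LL$ is already a full trio containing $L$), and hence $\hat{\cal T}(L)$ is semilinear, so $L$ is well-behaved, i.e.\ $L \in \LL_{WB}$. Thus $\LL \subseteq \LL_{WB}$. So the real content is the first part: that $\LL_{WB}$ is a semilinear full trio. Once that is established, combining the two parts gives that $\LL_{WB}$ is the largest semilinear full trio.

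To show $\LL_{WB}$ is semilinear: every $L \in \LL_{WB}$ is semilinear by definition (it is in $\hat{\cal T}(L)$, which is semilinear). So the work is entirely in showing $\LL_{WB}$ is a full trio, i.e.\ closed under homomorphism, inverse homomorphism, and intersection with regular languages. The key structural fact I would use is that the full trio operations ``compose'': if $L$ is well-behaved and $L'$ is obtained from $L$ by a single full trio operation, then $\hat{\cal T}(L') \subseteq \hat{\cal T}(L)$. Indeed, $\hat{\cal T}(L)$ is a full trio containing $L$, hence contains $L'$ (as $L'$ results from applying one full trio operation to $L$), hence contains $\hat{\cal T}(L')$ since $\hat{\cal T}(L')$ is the \emph{smallest} full trio containing $L'$. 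Therefore $\hat{\cal T}(L') \subseteq \hat{\cal T}(L)$, and since the latter is semilinear, so is the former, meaning $L'$ is well-behaved. This argument works uniformly for all three full trio operations, so $\LL_{WB}$ is closed under each of them, hence is a full trio.

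The one subtlety to handle carefully is that the definition of well-behaved is stated for a single language $L$ via $\hat{\cal T}(L)$ (the full trio generated by one language), whereas being a full trio is a property of a \emph{family}. I would note that for any family $\LL$, $\hat{\cal T}(\LL) = \bigcup_{L \in \LL} \hat{\cal T}(L)$ when $\LL$ is... actually this needs care, since a full trio need not be closed under union. But we do not need that: we only need closure of $\LL_{WB}$ under the three generating operations, and each such operation takes one language to one language, so the single-language argument above suffices directly. Concretely: if $L \in \LL_{WB}$ and $h$ is a homomorphism, then $h(L) \in \hat{\cal T}(L)$, which is semilinear, and moreover $\hat{\cal T}(h(L)) \subseteq \hat{\cal T}(L)$ is semilinear, so $h(L) \in \LL_{WB}$; identically for $h^{-1}(L)$ and for $L \cap R$ with $R$ regular.

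The main obstacle — really the only thing requiring thought — is confirming the inclusion $\hat{\cal T}(L') \subseteq \hat{\cal T}(L)$ and that this is all that is needed (as opposed to having to argue about unions or about the family $\LL_{WB}$ as a whole generating new languages). Once one sees that each full trio operation is ``absorbed'' inside the already-semilinear family $\hat{\cal T}(L)$, the result follows. I would then close by combining: $\LL_{WB}$ is a semilinear full trio, and it contains every semilinear full trio, so it is the largest one.

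\begin{proof}
We show first that $\LL_{WB}$ is a semilinear full trio, then that it contains every semilinear full trio.

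Every $L \in \LL_{WB}$ is semilinear, since $L \in \hat{\cal T}(L)$ and $\hat{\cal T}(L)$ is semilinear by the definition of well-behaved. So it remains to show $\LL_{WB}$ is closed under homomorphism, inverse homomorphism, and intersection with regular languages. Let $L \in \LL_{WB}$, and let $L'$ be obtained from $L$ by any one of these three operations. Since $\hat{\cal T}(L)$ is a full trio containing $L$, it also contains $L'$. Hence $\hat{\cal T}(L)$ is a full trio containing $L'$, and since $\hat{\cal T}(L')$ is by definition the smallest full trio containing $L'$, we get $\hat{\cal T}(L') \subseteq \hat{\cal T}(L)$. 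As $\hat{\cal T}(L)$ is semilinear, so is $\hat{\cal T}(L')$; that is, $L'$ is well-behaved, so $L' \in \LL_{WB}$. This applies to all three operations, so $\LL_{WB}$ is a full trio, and thus a semilinear full trio.

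Now let $\LL$ be any semilinear full trio and let $L \in \LL$. Then $\LL$ is a full trio containing $L$, so $\hat{\cal T}(L) \subseteq \LL$, and since $\LL$ is semilinear, $\hat{\cal T}(L)$ is semilinear. Thus $L$ is well-behaved, i.e. $L \in \LL_{WB}$. Hence $\LL \subseteq \LL_{WB}$. Therefore $\LL_{WB}$ is a semilinear full trio containing every semilinear full trio, i.e. it is the largest one.
\end{proof}
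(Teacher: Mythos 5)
Your proof is correct and follows essentially the same route as the paper's (which is stated much more tersely); your absorption argument $\hat{\cal T}(L') \subseteq \hat{\cal T}(L)$ is exactly the right way to make precise the paper's claim that ``the closure of each under the full trio properties are in it.'' No gaps.
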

\begin{proof}
First, it is a semilinear full trio since all languages in it are semilinear,
and the closure of each under the full trio properties are in it.

Furthermore, it is the largest since any language not in it must either not be semilinear,
or closing it under the full trio operations produces languages that are not semilinear.

\end{proof}
This is similar to the known result that there is a largest semilinear AFL
\cite{Ginsburg1971}.
This is an interesting language family, as properties that hold for this single
language family hold for all semilinear full trios.

For example, consider the following. A bounded language $L \subseteq w_1^* \cdots w_k^*$ is called
{\em bounded Ginsburg semilinear} (often just called bounded semilinear in the literature) if
the set $\{(i_1, \ldots, i_k) \mid w_1^{i_1} \cdots w_k^{i_k} \in L\}$ is a semilinear set.
The following is true from \cite{CIAA2016}:
\begin{proposition}
All bounded languages in $\LL_{WB}$ are bounded Ginsburg semilinear languages.
\end{proposition}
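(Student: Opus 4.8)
The plan is to reduce the statement to a known result about bounded languages. Since $\LL_{WB}$ is a full trio (indeed a semilinear full trio, being the largest one), every language $L \in \LL_{WB}$ that happens to be bounded, say $L \subseteq w_1^* \cdots w_k^*$, has the property that $\hat{\cal T}(L)$ is semilinear. The key point is that from a bounded language $L \subseteq w_1^* \cdots w_k^*$ one can recover the associated index set $B(L) = \{(i_1, \ldots, i_k) \mid w_1^{i_1} \cdots w_k^{i_k} \in L\}$ as the Parikh image (up to a letter-equivalence) of a language obtained from $L$ by full-trio operations. Concretely, I would introduce fresh letters $a_1, \ldots, a_k$ and an inverse homomorphism (or a combination of inverse homomorphism, intersection with the regular language $a_1^* \cdots a_k^*$ interleaved appropriately, and a homomorphism) that transforms $L$ into a language $L' \subseteq a_1^* \cdots a_k^*$ with $\psi(L') = B(L)$ under the ordering $a_1, \ldots, a_k$. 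Because $L \in \LL_{WB}$, $L'$ is semilinear, so $B(L)$ is a semilinear set, which is exactly the definition of $L$ being bounded Ginsburg semilinear.

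The main technical step is the construction of the map $L \mapsto L'$ using only full-trio operations. Here I would invoke the standard fact (used elsewhere for bounded languages, e.g.\ in \cite{CIAA2016}) that for a fixed choice of words $w_1, \ldots, w_k$, the function sending $w_1^{i_1} \cdots w_k^{i_k}$ to $a_1^{i_1} \cdots a_k^{i_k}$ is realized as a rational transduction, hence preserves membership in any full trio. More explicitly: take the regular language $R = w_1^* \cdots w_k^*$ over the original alphabet $\Sigma$, intersect $L$ with $R$ (harmless since $L \subseteq R$ already), then apply a finite transducer that reads a block $w_j$ and outputs $a_j$; equivalently, realize this as $g(h^{-1}(L) \cap R')$ for suitable homomorphisms $g, h$ and regular $R'$. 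Either way, $L' \in \hat{\cal T}(L) \subseteq \LL_{WB}$, so $L'$ is semilinear, and since $L' \subseteq a_1^* \cdots a_k^*$ with $\psi(L') = \{(i_1, \ldots, i_k) \mid w_1^{i_1} \cdots w_k^{i_k} \in L\}$, semilinearity of $\psi(L')$ is precisely semilinearity of the index set.

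I expect the main obstacle to be the bookkeeping in showing that the "decode blocks" transduction is genuinely a full-trio operation rather than glossing over it: one must handle the case where some $w_j$ is a prefix of (a power of) $w_{j+1}$ or the factorization of a word into blocks is ambiguous, so that the regular language $R = w_1^* \cdots w_k^*$ and the transducer are set up to pick out a canonical factorization. This is exactly the sort of care taken in \cite{CIAA2016}, and in fact the cleanest route is simply to quote the result of \cite{CIAA2016}: there it is shown that a bounded language $L$ is bounded Ginsburg semilinear if and only if $L$ lies in some semilinear full trio (equivalently, $\hat{\cal T}(L)$ is semilinear). Given that characterization, the proposition is immediate — $\LL_{WB}$ is a semilinear full trio, so any bounded $L \in \LL_{WB}$ is bounded Ginsburg semilinear — and the proof need only cite \cite{CIAA2016} and note that $\LL_{WB}$ is a semilinear full trio by the preceding proposition. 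I would write the proof in this short form, with the transduction argument above as the underlying justification in case a self-contained argument is wanted.
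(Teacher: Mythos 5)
Your proposal is correct and takes essentially the same route as the paper, which gives no proof at all and simply asserts the proposition as following from \cite{CIAA2016}; your recommended short form (cite \cite{CIAA2016} and note that $\LL_{WB}$ is a semilinear full trio) is exactly what the paper does. Your self-contained transduction sketch is also sound, and the factorization-ambiguity worry is harmless: the index set is by definition the set of \emph{all} exponent tuples $(i_1,\ldots,i_k)$ with $w_1^{i_1}\cdots w_k^{i_k}\in L$, which is precisely what the nondeterministic block-decoding transducer computes, so no canonical factorization is needed.
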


Next, it follows from Theorem 3.2.3 of \cite{G75}, that for all non-empty
languages $L$, $\hat{\cal T}(L) = \{h_2(h_1^{-1}(L) \cap R) \mid R \mbox{~is regular},
h_1, h_2 \mbox{~are decreasing homomorphisms}\}$. The homomorphisms
are both weak codings. Also, semilinear languages are
closed under homomorphisms. Hence, the following is true:
\begin{proposition}
$L$ is well-behaved if and only if the family $$\{h^{-1}(L) \cap R \mid R \mbox{~is regular},
h \mbox{~is a weak coding homomorphism}\}$$ are all semilinear.
\end{proposition}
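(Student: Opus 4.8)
The plan is to derive both implications directly from the characterization of $\hat{\cal T}(L)$ quoted just above the statement (Theorem 3.2.3 of \cite{G75}), namely that for non-empty $L$,
$$\hat{\cal T}(L) = \{h_2(h_1^{-1}(L) \cap R) \mid R \mbox{~regular},\ h_1, h_2 \mbox{~weak codings}\},$$
together with the fact that semilinearity is preserved under homomorphic images. The case $L = \emptyset$ is trivial and should be dispatched first: then $\hat{\cal T}(L) = \{\emptyset\}$ and every $h^{-1}(L) \cap R = \emptyset$, so both sides of the equivalence hold vacuously. So assume $L \neq \emptyset$ in what follows.

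For the forward direction, suppose $L$ is well-behaved, i.e.\ $\hat{\cal T}(L)$ is semilinear. Given any weak coding $h$ and regular $R$, I would observe that $h^{-1}(L) \cap R = h_2(h_1^{-1}(L) \cap R)$ with $h_1 = h$ and $h_2$ the identity homomorphism (which is a weak coding). Hence $h^{-1}(L) \cap R \in \hat{\cal T}(L)$, so it is semilinear. This gives one inclusion with essentially no work.

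For the converse, suppose every language of the form $h^{-1}(L) \cap R$ with $h$ a weak coding and $R$ regular is semilinear. Take an arbitrary element of $\hat{\cal T}(L)$; by the characterization it has the form $h_2(h_1^{-1}(L) \cap R)$ with $h_1, h_2$ weak codings and $R$ regular. By hypothesis $h_1^{-1}(L) \cap R$ is semilinear, and since the homomorphic image of a semilinear language is semilinear (the Parikh image of $h_2(L')$ is the image of $\psi(L')$ under the linear map induced by $h_2$, which sends semilinear sets to semilinear sets), $h_2(h_1^{-1}(L)\cap R)$ is semilinear. Thus every language in $\hat{\cal T}(L)$ is semilinear, i.e.\ $L$ is well-behaved.

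I do not anticipate a genuine obstacle here; the only point requiring a moment's care is making sure the two routine facts being invoked are legitimately available — the quoted form of $\hat{\cal T}(L)$ (already reproduced in the text, with the homomorphisms being weak codings) and closure of the semilinear languages under homomorphism (already noted in the surrounding discussion and standard from \cite{harrison1978}). If one wants to be scrupulous, one can also remark that the identity map qualifies as a weak coding so that the forward direction's application of the characterization is legitimate.
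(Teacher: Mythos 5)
Your proof is correct and follows essentially the same route as the paper: both directions rest on the Theorem 3.2.3 characterization $\hat{\cal T}(L) = \{h_2(h_1^{-1}(L) \cap R)\}$ with weak codings, plus closure of semilinear languages under homomorphism, which is exactly the justification the paper gives in the paragraph preceding the proposition. Your explicit handling of $L = \emptyset$ and the remark that the identity is a weak coding are minor additions of care, not a different argument.
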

It is evident that $h^{-1}(L)$ must be semilinear
since it was previously noted that the family obtained from any semilinear family via inverse weak coding homomorphisms must be semilinear.
Hence, if one examines the family of semilinear languages
${\cal L} = \{h^{-1}(L) \mid h \mbox{~is a weak coding homomorphism}\}$,
then $L$ is well-behaved if and only if ${\cal L} \wedge {\cal L}(\NFA)$.

\section{Applications to Indexed Grammars with Counters}
\label{sec:indexed}

In this section, we describe some new types of grammars obtained from existing grammars 
generating a semilinear language family $\LL$, by adding counters. The languages generated by these new grammars are then shown to be contained in $\hat{\cal F}(\LL \wedge \NCM)$,
and by an application of Proposition \ref{fullAFL}, are all semilinear with positive decidability properties.

We need the definition of an indexed grammar introduced in  \cite{A}  by following the notation of \cite{HU}, Section 14.3.
 \begin{definition}\label{def0}
  An {\em indexed grammar} is a 5-tuple $G = (V, \Sigma, I, P, S)$, where 
$V, \Sigma, I$ are finite pairwise disjoint sets: the set of nonterminals, terminals, and indices, respectively, $S$ is the start nonterminal, and
$P$ is a finite set of  productions, each of the form either
$$
\mbox{\bf 1)\ } A \rightarrow \nu, \quad \mbox{\bf 2)\ } A \rightarrow Bf, \quad \mbox{or}\quad 
\mbox{\bf 3)\ } Af \rightarrow \nu,$$
where $A, B\in V,$ $f\in I$ and $\nu \in (V\cup \Sigma)^*$.
\end{definition}
Let $\nu$ be an arbitrary sentential form of $G$, which is of the form
 $$\nu = u_1 A_1 \alpha_1 u_2 A_2 \alpha_2 \cdots u_k A_k \alpha_k u_ {k+1},$$ 
 where $A_i\in V, \alpha _i \in I^*, u_i\in \Sigma^*, 1 \leq i \leq k, u_{k+1} \in \Sigma^*$.
 For a sentential form $\nu ' \in  (VI^* \cup \Sigma)^*$, we write $\nu \Rightarrow _G \nu '$ if one of the following three conditions holds:
\begin{enumerate}
\item There exists a production in $P$ of the form (1) $A \rightarrow w_1 C_1  \cdots w_{\ell} C_\ell w_ {\ell+1}$, $C_j \in V,
w_j \in \Sigma^*$,  and there exists $i$ with $1\leq i\leq k,$  $A_i =A$ and 
$$\nu' = u_1 A_1 \alpha_1 \cdots u_{i} (w_1 C_1\alpha_i  \cdots w_{\ell} C_\ell \alpha_iw_ {\ell+1}) u_{i+1} A_{i+1}\alpha_{i+1} \cdots u_k A_k \alpha_k u_ {k+1}.$$

\item There exists a production in $P$  of the form (2)  $A\rightarrow Bf$ and there exists $i$, $1\leq i\leq k,$ $A_i =A$  and 
$\nu '  = u_1 A_1 \alpha_1 \cdots u_{i} (B f \alpha_i) u_{i+1} A_{i+1}\alpha_{i+1} \cdots u_k A_k \alpha_k u_ {k+1}.$

\item There exists a production in $P$ of the form (3) $A f\rightarrow w_1 C_1  \cdots w_{\ell} C_\ell w_ {\ell + 1}$, $C_j \in V,
w_j \in \Sigma^*$,  
and an $i$, $1\leq i\leq k,$ $A_i =A$, $\alpha_i = f\alpha'_i, \alpha'_i\in I^*$,  with
$\nu  ' = u_1 A_1 \alpha_1 \cdots u_{i} (w_1 C_1\alpha'_i  \cdots w_{\ell} C_\ell \alpha'_iw_ {\ell+1}) u_{i+1} A_{i+1}\alpha_{i+1}  \cdots u_k A_k \alpha_k u_ {k+1}.$
 \end{enumerate}
Then, $\Rightarrow_G ^*$ denotes the reflexive and transitive 
closure of $\Rightarrow_G$. The language $L(G)$ generated by  $G$ is the set $L(G) = \{u\in \Sigma^* ~|~ S \Rightarrow_G ^* u\}.$

This type of grammar can be generalized to include monotonic counters as follows:
\begin{definition}
An {\em indexed grammar with $k$ counters} is defined as in indexed grammars, except
where rules (1), (2), (3) above are modified so that a rule
$\alpha \rightarrow \beta$ 
now becomes:
\begin{equation}
\alpha \rightarrow ( \beta,  c_1, \ldots , c_k),
\label{newprod}
\end{equation}
where $c_i \ge 0$, $1 \leq i \leq k$. Sentential forms are of the form
$(\nu, n_1, \ldots, n_k)$, and $\Rightarrow_G$ operates as do indexed grammars on $\nu$, and for a production in Equation \ref{newprod}, adds $c_i$ to $n_i$, for $1 \leq i \leq k$.
The language generated by $G$ with terminal alphabet $\Sigma$ and
start nonterminal $S$ is,
$L(G) = \{ u ~|~ u \in \Sigma^*,
(S, 0, \ldots , 0) \Rightarrow_G^* (u, n_1, \ldots, n_k), n_1 = \cdots = n_k\}$.

Given an indexed grammar with counters, the {\em underlying grammar} is the indexed grammar obtained by removing the counter components from productions.
\end{definition}

Although indexed grammars generate non-semilinear languages, restrictions will be studied that only generate semilinear languages.


An indexed grammar $G$ is  {\em linear} \cite{DP} if  the right side of every production of $G$ has at most one variable. 
Furthermore, $G$ is {\em right linear} if it is linear, and terminals can only appear to the left of a nonterminal in productions.
Let $\ILIN$ be the family of languages generated by linear indexed grammars, and let $\IRLIN$ be the family of languages
generated by right linear indexed grammars.

Similarly, indexed grammars with counters can be restricted to be linear.
An indexed grammar with $k$-counters is said to be {\em linear indexed} (resp.\ {\em right linear}) with $k$ counters, if the underlying grammar is linear (resp.\ right linear).
Let $\ILIN_c$ (resp.\ $\IRLIN_c$) be the family of languages generated by linear (resp.\ right linear) indexed grammars with counters.

\begin{example}
Consider 
the language $L = \{ v\$w  \mid v,  w \in \{a,b,c\}^*,
|v|_a = |v|_b = |v|_c, |w|_a = |w|_b = |w|_c \}$ which can be generated by a linear indexed
grammar with counters $G = (V,\Sigma,I,P,S)$ where
$P$ contains 
\begin{tabbing}
$S \rightarrow (S,1,1,1,0,0,0) \mid (S,0,0,0,1,1,1) \mid (T,0,0,0,0,0,0)$\\
$T \rightarrow (aT,1,0,0,0,0,0) \mid (bT,0,1,0,0,0,0) \mid (cT,0,0,1,0,0,0) \mid (\$R, 0,0,0,0,0,0)$\\
$R \rightarrow (aR,0,0,0,1,0,0) \mid (bR,0,0,0,0,1,0) \mid (cR,0,0,0,0,0,1) \mid (\lambda,0,0,0,0,0,0)$.
\end{tabbing}

This language cannot be generated by a linear indexed grammar \cite{LATA2017journal}.
\end{example}

Next, a characterization of languages generated by these grammars will be given with a sequence of results used towards the proof of Proposition 
\ref{thm1}.

In the following, $\Sigma$ is a terminal alphabet,
$C = \{c_1, \ldots , c_k \}$ (for some $k \ge 1$) is an alphabet
distinct from $\Sigma$, and $h_c$ is a homomorphism 
on $\Sigma \cup C$ defined by
$h_c(a) = a$ for each $a$ in $\Sigma$, and
$h_c(c_i) = \lambda$ for each $c_i$ in $C$.

\begin{lemma} \label{lem0}
If $L$ is in $\NCM$ (resp., $\NPCM$),
there is  regular language (resp., $\NPDA$) $R$ over the alphabet $\Sigma \cup C$
such that $L = h_c(\{w  ~|~ w \in R,  |w|_{c_1} = \cdots = |w|_{c_k}\})$.
\end{lemma}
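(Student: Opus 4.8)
The plan is to prove the two directions of the equality separately, treating the $\NCM$ and $\NPCM$ cases uniformly. Recall the setup: $C = \{c_1,\ldots,c_k\}$ is disjoint from $\Sigma$, and $h_c$ erases each $c_i$ while fixing $\Sigma$. I want to show that a language $L$ lies in $\NCM$ (resp.\ $\NPCM$) if and only if there is a regular language (resp.\ an $\NPDA$ language) $R$ over $\Sigma \cup C$ with $L = h_c(\{w \in R \mid |w|_{c_1} = \cdots = |w|_{c_k}\})$. The idea is that the $c_i$ symbols record, on the input track itself, the increments and decrements of the counters, and the constraint $|w|_{c_1} = \cdots = |w|_{c_k}$ is exactly what forces each counter to return to zero.

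For the forward direction, let $L$ be accepted by an $\NCM$ (resp.\ $\NPCM$) $M$ with $k$ counters. By \cite{Ibarra1978} I may assume each counter is $1$-reversal-bounded, so during a computation each counter first only increments and then only decrements, ending at zero. I build a finite automaton (resp.\ pushdown automaton) $M'$ over $\Sigma \cup C$ that simulates $M$ step for step, but drops the counters: whenever $M$ would increment counter $i$, $M'$ instead reads the letter $c_i$ from its input; whenever $M$ would decrement counter $i$, $M'$ also reads $c_i$; tests for zero/nonzero are replaced by nondeterministic guesses about which phase each counter is in, verified by the finite control (so that a $c_i$ that is a "decrement" never precedes any "increment" $c_i$... actually, since the counter must return to zero, the number of increment-$c_i$'s equals the number of decrement-$c_i$'s, so the total count of $c_i$ is twice the maximum value, but what matters is only that, across all $i$, the computation is valid). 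In the $\NPCM$ case $M'$ keeps the pushdown and is therefore an $\NPDA$; in the $\NCM$ case $M'$ has only finite control and so $R = L(M')$ is regular. The key point is that a run of $M'$ reading a word $w$ with the interleaved $c_i$'s corresponds to a legitimate accepting run of $M$ exactly when, for every $i$, the increments of counter $i$ balance its decrements; I encode enough of the phase structure in the finite control so that the \emph{only} remaining obligation is the global condition $|w|_{c_1} = \cdots = |w|_{c_k}$ — no wait, this is slightly too strong, so the cleaner route is to have $M'$ itself check that for each $i$ the increment-phase $c_i$'s and decrement-phase $c_i$'s are equal in number \emph{up to a common offset}, which is not regular; so instead I let $M'$ use \emph{distinct} marker letters for increment and decrement and then the balance condition becomes $|w|_{c_i^+} = |w|_{c_i^-}$ for each $i$. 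To match the statement's single family $C$ with the condition $|w|_{c_1}=\cdots=|w|_{c_k}$, I should instead only mark \emph{one} of the two phases: $M'$ reads $c_i$ exactly when $M$ increments counter $i$ and nothing when it decrements, and $M'$'s finite control verifies (via the $1$-reversal structure) that the number of decrements of counter $i$ equals the number of increments; this check needs counting, so it cannot be done in the finite control. The honest resolution, and the one I expect the paper uses, is: $M'$ reads $c_i$ on \emph{every} move that touches counter $i$ (increment or decrement), and relies on the fact that with $1$-reversal counters the total number of such moves on counter $i$ is even and the counter ends at zero automatically once increments $=$ decrements; since increments always come first, $|w|_{c_i}$ being even and split evenly is guaranteed by the finite control tracking the reversal. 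Then across counters, requiring $|w|_{c_1} = \cdots = |w|_{c_k}$ is an \emph{additional} normalization that we can always arrange by padding (adding dummy increment/decrement pairs) so that all counters make the same number of moves — this padding is done inside $M'$'s nondeterminism. I would present this padding trick explicitly.

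For the converse, suppose $R$ is regular (resp.\ an $\NPDA$ language) over $\Sigma \cup C$ and $L = h_c(\{w \in R \mid |w|_{c_1} = \cdots = |w|_{c_k}\})$. I construct an $\NCM$ (resp.\ $\NPCM$) $M$ for $L$: $M$ simulates the finite automaton (resp.\ pushdown automaton) for $R$ on an input over $\Sigma$, but whenever the simulated machine would read a $c_i$, $M$ instead performs a $c_i$-bookkeeping step on its counters rather than consuming input. To enforce $|w|_{c_1} = \cdots = |w|_{c_k}$ with reversal-bounded counters, $M$ uses $k-1$ counters: on reading $c_1$ it increments all of counters $2,\ldots,k$; on reading $c_i$ (for $2 \le i \le k$) it decrements counter $i$; at the end it checks all counters are zero. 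This keeps each counter $1$-reversal-bounded provided the guessed word $w$ has its $c_1$'s occurring before enough $c_i$'s — which is not guaranteed. So instead I give $M$ one counter per pair, or better, use $2(k-1)$ counters: counter $A_i$ counts $c_i$ and counter... — the standard trick is that comparing $k$ symbol-counts for equality is done by reversal-bounded counters because $\{w \mid |w|_{c_1} = \cdots = |w|_{c_k}\}$ is an $\NCM$ language, and $\NCM$ (resp.\ $\NPCM$) is closed under intersection with $\NCM$ and under $\lambda$-free... rather, under homomorphism that erases a subalphabet. Concretely: the language $R' = \{w \in R \mid |w|_{c_1} = \cdots = |w|_{c_k}\}$ is the intersection of $R$ (regular, resp.\ $\NPDA$) with an $\NCM$ language, hence is in $\NCM$ (resp.\ $\NPCM$) since both families are closed under intersection with $\NCM$ \cite{Ibarra1978}; and $L = h_c(R')$ where $h_c$ erases only the counter-alphabet, and $\NCM$ and $\NPCM$ are closed under such homomorphisms (indeed under arbitrary homomorphism, being full trios — or at least under $\lambda$-homomorphisms that erase a fixed subalphabet, which is immediate by absorbing the erased letters into $\lambda$-moves). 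This gives the converse cleanly without hand-built counter gadgets.

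The main obstacle is the forward direction's bookkeeping: making precise that the only "global" obligation left after the finite/pushdown control's local simulation is the single balance condition $|w|_{c_1} = \cdots = |w|_{c_k}$, rather than a per-counter offset-dependent condition. I expect to handle this by (i) using $1$-reversal normal form so each counter's life is a clean up-then-down, (ii) having the automaton mark with $c_i$ only the increments (or equivalently, reading $c_i$ on increments and on decrements but then noting decrements are forced to equal increments by a counter returning to zero — which is exactly the $|w|$-balance), and (iii) inserting dummy balanced increment/decrement pairs nondeterministically so all $k$ counters perform the same number of operations, reducing "each counter balanced" to the uniform condition in the statement. Everything else — regularity of $L(M')$ in the $\NCM$ case, $\NPDA$-hood in the $\NPCM$ case, and the closure-property argument for the converse — is routine.
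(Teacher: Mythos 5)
Your converse direction is fine (and not actually required: the lemma as stated only asserts the forward implication), and your overall instinct --- simulate the machine while externalizing the counter operations as input letters, then let the letter-count condition enforce that the counters return to zero --- is the right one. But the forward direction, as you finally resolve it, has a genuine gap. Your ``honest resolution'' has $M'$ read the \emph{same} letter $c_i$ on every increment and every decrement of counter $i$, and claims that the finite control, by tracking the single reversal, guarantees that $|w|_{c_i}$ is ``split evenly'' between increments and decrements. It cannot: the finite control knows only the \emph{order} of operations (all increments before all decrements), not their \emph{counts}, and verifying that the number of increments equals the number of decrements is exactly the unbounded counting the counter was doing. With one marker per counter, the available condition $|w|_{c_1} = \cdots = |w|_{c_k}$ only constrains the totals $u_i + d_i$, not the differences $u_i - d_i$; padding with balanced pairs preserves those differences, so unbalanced (hence invalid) computations survive and the right-hand side strictly overshoots $L$. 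Already for $k=1$ the condition is vacuous and the construction fails.

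The fix is the two-marker encoding that you wrote down and then discarded. You abandoned it because you thought the statement forces one letter of $C$ per counter, but $k$ in the lemma is just ``some $k \ge 1$'': the paper takes an $n$-counter machine in $1$-reversal normal form, sets $C = \{b_1, c_1, \ldots, b_n, c_n\}$ (so $k = 2n$), reads $b_i$ on each increment and $c_i$ on each decrement of counter $i$, and obtains $L = h_c(\{w \in R \mid |w|_{b_i} = |w|_{c_i} \text{ for each } i\})$, which correctly captures ``each counter returns to zero.'' The uniform condition is then recovered not by nondeterministic padding inside the automaton but by the one-line trick of replacing $R$ with $R' = (b_1c_1)^* \cdots (b_nc_n)^* R$: prepending balanced pairs lets every word be padded up so that all $2n$ counts equal the common maximum, while any word of $R'$ with all counts equal still has $|w|_{b_i} = |w|_{c_i}$ for each $i$, and $R'$ remains regular (resp.\ an $\NPDA$ language). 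Combining the per-counter balance encoding with this concatenation closes the gap; everything else in your write-up is routine.
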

\begin{proof}
Let $L \subseteq  \Sigma^*$ be accepted by an $\NCM$ (resp., $\NPCM$)
with $n$ 1-reversal
counters. Let $C = \{b_1, c_1, \ldots, b_n,c_n\}$ be 
an alphabet distinct from $\Sigma$.  (Thus $k = 2n$.)
It follows from the constructions in \cite{Ibarra1978}, that
there is a regular language $R$ (resp., $\NPDA$) over alphabet
$\Sigma \cup C$  such that 
$L = h_c(\{w  ~|~ w \in R,  |w|_{b_1} = |w|_{c_1}, \ldots,
|w|_{b_n} = |w|_{c_n}\})$.
Now let $R' = (b_1c_1)^* \cdots (b_nc_n)^*R$.  Clearly, $R'$ is also
regular (resp., $\NPDA$), and 
$L = h_c(\{w  ~|~ w \in R',  |w|_{b_1} = |w|_{c_1} = \cdots = 
|w|_{b_n} = |w|_{c_n}\})$.

\end{proof}

\begin{lemma} \label{lem1}
If $L_1 \subseteq \Sigma^*$ is in $\ILIN$, and
$L_2 \subseteq \Sigma^*$ is in $\NCM$,
then $L_1 \cap L_2 \in \ILIN_c$.
\end{lemma}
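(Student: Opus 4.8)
The plan is to show directly that given a linear indexed grammar $G_1$ for $L_1$ and an $\NCM$ $M_2$ for $L_2$, one can build a linear indexed grammar with counters $G$ whose underlying grammar is linear and which generates exactly $L_1 \cap L_2$. The key structural observation is that a linear indexed grammar has at most one nonterminal in every sentential form (once a derivation leaves the start, each step that is not a pure terminal-producing step maintains a single nonterminal carrying a single index stack), so a derivation in $G_1$ reads like a single "thread" that emits terminals to its left and right while manipulating one index stack. This single-thread behaviour is exactly what lets us attach a bounded number of reversal-bounded counters and simulate $M_2$'s counters inside the grammar.

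The construction I would carry out: first normalise $M_2$ so that all counters are $1$-reversal-bounded, each increases at least once and decreases to zero before acceptance (as already done in the proof of Proposition \ref{prop1}). Then form $G$ by taking the nonterminals of $G$ to be pairs $[A, p, q]$ (or a similar encoding) where $A$ is a nonterminal of $G_1$ and $p,q$ are states of $M_2$, with the intended invariant that in a sentential form $u\,[A,p,q]\,v$ (with index stack $\alpha$), $G_1$ can derive from $A\alpha$ a terminal string whose $u$-side and $v$-side match, while $M_2$ started in state $p$ reading $u$ ends in a state from which, reading $v$ after the "middle", it reaches $q$ and then an accepting state. Here we must be slightly careful because $M_2$ reads its input left to right but a linear indexed production can emit terminals on both sides of the nonterminal; the standard fix is to carry two state components — one tracking $M_2$'s progress from the left end into the already-emitted left context, and one tracking $M_2$'s state "coming in from the right", i.e. the guessed state at which $M_2$ will be when it reaches the right context — exactly as in the classical proof that linear context-free languages are closed under intersection with regular languages, lifted to the indexed setting. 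The index-stack operations of $G_1$ (rules of type (2) pushing $f$, and type (3) popping $f$) are copied verbatim; they are orthogonal to the state bookkeeping. Finally, whenever $M_2$ would increment counter $i$ on a transition, the corresponding production of $G$ adds $1$ to counter $i$ via the counter component of Equation (\ref{newprod}); terminal-emitting productions that do not move a counter use all-zero counter vectors. The terminating production $[A,p,q] \to (\lambda, 0,\ldots,0)$ (or its terminal-word analogue) is allowed only when the state components are consistent with $M_2$ accepting. Since $G$'s underlying grammar is obtained from $G_1$ by only renaming nonterminals and restricting right-hand sides in a way that keeps at most one nonterminal, it is linear, so $G \in \ILIN_c$. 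A careful check of the invariant in both directions gives $L(G) = L_1 \cap L_2$, and the counter equality condition $n_1 = \cdots = n_k$ at the end of a derivation in $G$ corresponds precisely to $M_2$'s counters being balanced in the way the normalisation guarantees.

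The main obstacle I expect is handling the mismatch between $M_2$'s strictly left-to-right input consumption and the two-sided terminal emission of linear indexed productions, together with the $\lambda$-productions/index-manipulation steps of $G_1$ that emit no terminals at all. The clean way to resolve it is the two-state-component ("incoming from left, incoming from right") encoding mentioned above, plus allowing $\lambda$-transitions of $M_2$ to be simulated by productions of $G$ that change only the state components (and possibly a counter) without touching $\nu$; one has to verify this does not introduce spurious derivations, which is where the bulk of the bookkeeping lies. A secondary, minor point is ensuring the normalisation of $M_2$ (all counters hit zero before accepting) is actually used so that requiring $n_1 = \cdots = n_k = $ (total increments) in $G$ faithfully captures "each counter returns to zero"; since every counter is $1$-reversal-bounded and must decrement back to $0$, the number of decrements equals the number of increments, and encoding the increment count into the counters and then demanding equality across counters is not automatically what we want — rather we want each counter's value to record something that the acceptance condition can check. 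A simpler alternative, which I would actually prefer for the write-up, is to not simulate $M_2$'s counters in $G$'s counters at all, but instead first apply Lemma \ref{lem0} to replace $M_2$ by a regular language $R$ over $\Sigma \cup C$ with $L_2 = h_c(\{w \in R \mid |w|_{c_1} = \cdots = |w|_{c_k}\})$: build a linear indexed grammar for $L_1' = h_c^{-1}(L_1) \cap R$ (using closure of $\ILIN$ under inverse homomorphism and intersection with regular languages, which hold since $\ILIN$ is a semilinear full trio), then convert $L_1'$ into an $\ILIN_c$ grammar by, in each production, replacing every emitted terminal $c_i \in C$ by "emit nothing, add $1$ to counter $i$", so that the final counter-equality requirement $n_1 = \cdots = n_k$ reproduces exactly the condition $|w|_{c_1} = \cdots = |w|_{c_k}$, yielding $L(G) = h_c(L_1') = L_1 \cap L_2$. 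This reduces the whole lemma to the already-available full-trio closure of $\ILIN$ plus a trivial syntactic rewrite of productions, and the only thing to check is that this rewrite preserves linearity of the underlying grammar, which it does since it removes terminal symbols but adds no nonterminals.
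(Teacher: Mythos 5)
Your ``simpler alternative'' at the end is exactly the paper's proof: apply Lemma \ref{lem0} to get a regular $R$ over $\Sigma \cup C$, form $h_c^{-1}(L_1) \cap R$ using the full-trio closure of $\ILIN$, and then rewrite each production $\alpha \rightarrow \beta$ as $\alpha \rightarrow (h_c(\beta), |\beta|_{c_1}, \ldots, |\beta|_{c_k})$, which preserves linearity and makes the final counter-equality condition encode $|w|_{c_1} = \cdots = |w|_{c_k}$. The lengthy direct product construction you sketch first is unnecessary (and is where all the bookkeeping risk lives), so your instinct to prefer the reduction via Lemma \ref{lem0} is the right one.
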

\begin{proof}
By Lemma \ref{lem0}, since $L_2$ is in $\NCM$,
there is  regular set $R$ over alphabet $\Sigma \cup C$
such that $L_2 = h_c(\{w  ~|~ w \in R,  |w|_{c_1} = \cdots = |w|_{c_k}\})$.
Also, $L_1' = h_c^{-1}(L_1)$ is also a linear indexed language
since the family is a full trio \cite{DP}, and $L_3 = L_1' \cap R$ is 
also a linear indexed language.  Let $G_3$
be a linear indexed grammar generating $L_3$.

We can now construct from $G_3$ a linear indexed grammar with counters
generating $L = L_1 \cap L_2$, such that,
if $\alpha \rightarrow \beta$ is a production in $G_3$,
then
$\alpha \rightarrow (h_c(\beta), |\beta|_{c_1}, \ldots, |\beta|_{c_k})$ is a production in $G_4$.
Then $L(G_4) = L_1 \cap L_2$.

\end{proof}

Since languages generated by linear indexed grammars with counters are clearly
closed under homomorphism, the following is true:
\begin{corollary} \label{cor1}
Let $h$ be a homomorphism, $L_1 \in \ILIN$, and $L_2 \in \NCM$.  
Then $h(L_1 \cap L_2) \in \ILIN_c$.
\end{corollary}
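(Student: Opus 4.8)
The plan is to prove Corollary \ref{cor1} as a direct consequence of Lemma \ref{lem1} together with the closure of $\ILIN_c$ under homomorphism, which is essentially immediate from the grammar definition. First I would recall that Lemma \ref{lem1} already gives $L_1 \cap L_2 \in \ILIN_c$ whenever $L_1 \in \ILIN$ and $L_2 \in \NCM$. So it remains only to observe that applying an arbitrary homomorphism $h$ to a language generated by a linear indexed grammar with counters again yields a language in $\ILIN_c$.

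For the homomorphism closure step, I would argue as one does for ordinary (linear) indexed grammars. Given a linear indexed grammar with $k$ counters $G = (V, \Sigma, I, P, S)$ generating $L_1 \cap L_2$, and a homomorphism $h : \Sigma^* \to \Delta^*$, I would construct $G' = (V, \Delta, I, P', S)$ where each production is modified by replacing every terminal $a$ occurring in a right-hand side $\beta$ by the string $h(a)$, leaving the nonterminal, the single variable position, the index operations, and the counter increments $c_1, \ldots, c_k$ untouched. Since $h$ replaces terminals by terminal strings, this preserves linearity (the number of nonterminals in any right-hand side is unchanged), preserves the index-stack manipulations, and preserves the counter behaviour. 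A straightforward induction on derivation length shows that $(S, 0, \ldots, 0) \Rightarrow_{G'}^* (u', n_1, \ldots, n_k)$ for $u' \in \Delta^*$ if and only if $(S, 0, \ldots, 0) \Rightarrow_G^* (u, n_1, \ldots, n_k)$ for some $u \in \Sigma^*$ with $h(u) = u'$; hence $L(G') = h(L(G)) = h(L_1 \cap L_2)$, and imposing the acceptance condition $n_1 = \cdots = n_k$ on both sides matches up. Therefore $h(L_1 \cap L_2) \in \ILIN_c$.

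There is essentially no serious obstacle here; the only point requiring minor care is the case $h(a) = \lambda$, i.e.\ an erasing homomorphism, but this causes no difficulty since indexed grammar productions already allow terminal strings (including $\lambda$) on right-hand sides, so erasing a terminal just shortens the right-hand side without affecting nonterminal count or index/counter operations. One should also note that the sentence preceding the corollary in the excerpt already asserts that $\ILIN_c$ is closed under homomorphism, so in the write-up it suffices to invoke that closure together with Lemma \ref{lem1}: if $L_1 \in \ILIN$ and $L_2 \in \NCM$ then $L_1 \cap L_2 \in \ILIN_c$ by Lemma \ref{lem1}, and applying $h$ keeps us in $\ILIN_c$, giving $h(L_1 \cap L_2) \in \ILIN_c$.
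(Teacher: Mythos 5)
Your proposal is correct and matches the paper's argument: the paper likewise derives the corollary from Lemma \ref{lem1} together with the observation that $\ILIN_c$ is (clearly) closed under homomorphism. Your additional detail on the terminal-substitution construction just spells out what the paper leaves as an obvious remark.
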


\begin{lemma} \label{lem2}
If $L \in \ILIN_c$,
then $L = h(L_1 \cap L_2)$
for some homomorphism $h$, $L_1 \in \ILIN$, and 
$L_2 \in \NCM$.
\end{lemma}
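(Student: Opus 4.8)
The plan is to reverse the construction in Lemma \ref{lem1}. Suppose $L = L(G)$ for a linear indexed grammar with $k$ counters $G$, and let $G'$ be its underlying linear indexed grammar, which has rules of the forms (1), (2), (3) annotated with counter increments $(c_1,\ldots,c_k)$. The idea is to push each counter increment into the generated terminal string as fresh ``counter letters'' $C = \{c_1,\ldots,c_k\}$ disjoint from $\Sigma$: from a production $\alpha \rightarrow (\beta, c_1, \ldots, c_k)$ of $G$, form the production $\alpha \rightarrow \beta\, c_1^{c_1}\cdots c_k^{c_k}$ of a plain linear indexed grammar $G''$ over $\Sigma \cup C$ (taking care to append the new letters in a way compatible with the linear-indexed format --- e.g.\ appending them to the right of $\beta$, which keeps at most one nonterminal in the right-hand side and leaves the index-manipulation rules untouched). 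Then $L_1 := L(G'') \in \ILIN$, since $G''$ is a genuine linear indexed grammar.

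Next I would account for the counter-equality acceptance condition. A word $u \in \Sigma^*$ lies in $L$ exactly when $G''$ can derive some word $w$ with $h_c(w) = u$ and $|w|_{c_1} = \cdots = |w|_{c_k}$ (where $h_c$ erases the letters of $C$ and fixes $\Sigma$, as defined just before Lemma \ref{lem0}), because the total number of $c_i$'s generated equals the final value $n_i$ of counter $i$. The set $L_2 := \{w \in (\Sigma \cup C)^* \mid |w|_{c_1} = \cdots = |w|_{c_k}\}$ is accepted by an $\NCM$: it can be recognized with $2(k-1)$ (or $k$, with a guess) $1$-reversal counters by counting occurrences of $c_1,\ldots,c_k$ and comparing, so $L_2 \in \NCM$. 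Hence $L = h_c(L_1 \cap L_2)$, which is exactly of the required form with $h = h_c$, $L_1 \in \ILIN$, and $L_2 \in \NCM$.

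The one point needing care --- and the main obstacle --- is verifying that $L(G'') \cap L_2$ really projects under $h_c$ onto $L(G)$, i.e.\ that the derivations of $G''$ are in exact correspondence with those of $G$. This requires observing that inserting terminal strings $c_i^{c_i}$ into right-hand sides does not interfere with the index mechanism: indices are attached only to nonterminals, the new letters are pure terminals, and in every sentential form the nonterminal occurrences and their index stacks evolve identically in $G$ and $G''$. So derivations $(S,0,\ldots,0) \Rightarrow_G^* (u,n_1,\ldots,n_k)$ correspond bijectively to derivations $S \Rightarrow_{G''}^* w$ with $\psi$-counts $|w|_{c_i} = n_i$ and $h_c(w) = u$; intersecting with $L_2$ enforces $n_1 = \cdots = n_k$, and applying $h_c$ recovers $u$. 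The only mild subtlety is the placement of the appended counter letters so the grammar stays linear (and, if one also wants the right-linear case, to the left of the nonterminal) --- but since we only append terminals this is immediate and preserves all three rule shapes in Definition \ref{def0}. This establishes the lemma.
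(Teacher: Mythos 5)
Your proposal is correct and follows essentially the same route as the paper: encode each production's counter increments as occurrences of fresh terminal letters $c_1,\ldots,c_k$ in a counter-free linear indexed grammar, take $L_2 = \{w \mid |w|_{c_1}=\cdots=|w|_{c_k}\} \in \NCM$, and recover $L$ as $h_c(L_1\cap L_2)$. The only (immaterial) difference is that the paper prepends the block $c_1^{d_1}\cdots c_k^{d_k}$ to the left of $\beta$ rather than appending it on the right, which, as you note yourself, is the choice that also survives the right-linear restriction used later.
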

\begin{proof}
Let $L$ be generated by $G$.  Construct a linear indexed grammar
(without counters) $G_1$ as follows:

If $\alpha \rightarrow (\beta, d_1, \ldots, d_k)$ is a rule in $G$,
then
$\alpha \rightarrow \beta'$ is a rule in $G_1$,
where $\beta' = c_1^{d_1} \cdots c_k^{d_k} \beta$,
(i.e., we append to the left of $\beta$ a terminal string representing
the increments in the counters).

Let $L_1$ be the language generated by $G_1$.
Let $L_2 = \{ w  ~|~ w \in (\Sigma \cup C)^*,
|w|_{c_1} = \cdots = |w|_{c_k} \}$.  Clearly $L_2$ is an $\NCM$ language,
and $L = h_c(L_1 \cap L_2)$

\end{proof}


\begin{proposition} \label{thm1}
$L \in \ILIN_c$ 
if and only if there is a homomorphism $h$, $L_1 \in \ILIN$,
and $L_2 \in \NCM$ such that $L = h(L_1 \cap L_2)$.
\end{proposition}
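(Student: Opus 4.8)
The plan is to prove Proposition~\ref{thm1} simply by combining the three preceding lemmas, which together already supply both directions of the biconditional. The statement asserts that $L \in \ILIN_c$ if and only if $L = h(L_1 \cap L_2)$ for some homomorphism $h$, some $L_1 \in \ILIN$, and some $L_2 \in \NCM$, so there is essentially no new construction to carry out — the work has been front-loaded into Lemmas~\ref{lem0}, \ref{lem1}, and \ref{lem2} and Corollary~\ref{cor1}.

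For the ``if'' direction, I would assume $L = h(L_1 \cap L_2)$ with $h$ a homomorphism, $L_1 \in \ILIN$, and $L_2 \in \NCM$. Then Corollary~\ref{cor1} — which is exactly the statement that $h(L_1 \cap L_2) \in \ILIN_c$ under these hypotheses — immediately gives $L \in \ILIN_c$. (Corollary~\ref{cor1} itself follows from Lemma~\ref{lem1}, which shows $L_1 \cap L_2 \in \ILIN_c$, together with the observation that $\ILIN_c$ is closed under homomorphism, since one can just append the desired output to the terminal strings produced by the grammar without touching the counters.)

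For the ``only if'' direction, I would take $L \in \ILIN_c$ and invoke Lemma~\ref{lem2} verbatim: it states that every $L \in \ILIN_c$ can be written as $h(L_1 \cap L_2)$ with $h$ a homomorphism, $L_1 \in \ILIN$, and $L_2 \in \NCM$. Concretely, Lemma~\ref{lem2} replaces each counter-incrementing rule $\alpha \to (\beta, d_1, \ldots, d_k)$ of the given grammar by the counter-free rule $\alpha \to c_1^{d_1} \cdots c_k^{d_k}\beta$, obtaining a linear indexed grammar for some $L_1 \in \ILIN$; intersecting with the $\NCM$ language $L_2 = \{w \mid |w|_{c_1} = \cdots = |w|_{c_k}\}$ enforces the equal-counters acceptance condition, and applying $h_c$ erases the bookkeeping symbols, yielding $L$. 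Thus $L = h_c(L_1 \cap L_2)$ has the required form.

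There is no real obstacle here: the proposition is a packaging of the lemmas, and the only thing to be careful about is making sure the two directions line up with the correct lemma (Corollary~\ref{cor1} for ``if'', Lemma~\ref{lem2} for ``only if''). If anything, the subtle point already handled in the lemmas — and worth a one-line reminder in the proof — is that $\ILIN$ is a full trio (by \cite{DP}), which is what lets Lemma~\ref{lem1} pass from $L_2 \in \NCM$ to an intersection with a \emph{regular} language $R$ (via Lemma~\ref{lem0}) rather than with $L_2$ directly, and keeps $h_c^{-1}(L_1) \cap R$ inside $\ILIN$. The proof itself is then just: ``Immediate from Corollary~\ref{cor1} and Lemma~\ref{lem2}.''
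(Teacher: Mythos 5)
Your proof is correct and matches the paper's exactly: the paper also disposes of Proposition~\ref{thm1} with the single line ``This follows immediately from Corollary~\ref{cor1} and Lemma~\ref{lem2},'' using Corollary~\ref{cor1} for the ``if'' direction and Lemma~\ref{lem2} for the ``only if'' direction. Your additional remarks about the full-trio property of $\ILIN$ and the role of $h_c$ accurately reflect the content of the supporting lemmas.
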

\begin{proof}
This follows immediately from Corollary \ref{cor1} and Lemma \ref{lem2}.

\end{proof}

Implied from the above result and Proposition \ref{fullAFL} and since $\ILIN$ is an effectively semilinear trio \cite{DP} is that
$\ILIN_c \subseteq \hat{\cal F}(\ILIN \wedge \NCM)$, and therefore $\ILIN_c$ is effectively semilinear.

\begin{corollary}
The languages generated by linear indexed grammar with counters are effectively semilinear, with decidable emptiness, membership, and infiniteness problems.
\end{corollary}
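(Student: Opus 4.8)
The plan is to derive this as an immediate consequence of the machinery already assembled in the excerpt. The key observation is that Proposition \ref{thm1} gives the containment $\ILIN_c \subseteq {\cal H}(\ILIN \wedge \NCM)$, so in particular every language in $\ILIN_c$ lies in $\hat{\cal F}(\ILIN \wedge \NCM)$. Since $\ILIN$ is a full trio of effectively semilinear languages \cite{DP}, Proposition \ref{fullAFL} tells us that $\hat{\cal F}(\ILIN \wedge \NCM)$ is (effectively) semilinear. Chasing through the proofs of Propositions \ref{prop1}, \ref{fullAFL}, and \ref{thm1}, every construction is algorithmic: the regular set $R$ in Lemma \ref{lem0}, the full-trio closure constructions for $\ILIN$, the counter-encoding in Lemma \ref{lem1}, and the semilinear-set manipulations in Proposition \ref{prop1} (intersection and projection of semilinear sets, both effective \cite{Gins,Ginsburg1971}). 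Hence from a given linear indexed grammar with counters one can effectively produce a semilinear representation of its Parikh image.

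Concretely I would proceed as follows. First, cite Proposition \ref{thm1} to write any $L \in \ILIN_c$ as $h(L_1 \cap L_2)$ with $L_1 \in \ILIN$, $L_2 \in \NCM$, and observe that this places $L$ inside $\hat{\cal F}(\ILIN \wedge \NCM)$. Second, invoke Proposition \ref{fullAFL} with $\LL = \ILIN$ — which is a semilinear full trio, effectively so by Parikh-type results for linear indexed grammars \cite{DP} — to conclude $\hat{\cal F}(\ILIN \wedge \NCM)$, and therefore $L$, is effectively semilinear. Third, derive the decidability consequences exactly as sketched in the text preceding the earlier corollaries: emptiness is decided by testing whether the constructed semilinear set is empty; membership of a word $w$ is decided by intersecting with the singleton $\{w\}$ (a regular, hence $\NCM$, language, so the intersection stays in the family, still with an effectively computable semilinear image) and testing emptiness; infiniteness is decided by checking whether the semilinear set — equivalently, a letter-equivalent regular language — is infinite, which is straightforward from the constant-and-periods representation.

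There is essentially no serious obstacle here: the proposition is a corollary in the literal sense, and the only thing to be careful about is \emph{effectiveness} rather than mere semilinearity. The mild bookkeeping is to check that each ingredient is constructive — in particular that the full-trio closure operations on $\ILIN$ can be carried out on grammars (standard), that the passage $\ILIN \wedge \NCM \to \hat{\cal F}(\cdot)$ preserves effective semilinearity (this is exactly what the proof of Proposition \ref{fullAFL} via Proposition \ref{prop1} and Lemma \ref{lemma2} delivers, since substitution of regular languages and the semilinear-set operations used are all algorithmic), and that reducing membership to emptiness via intersection with $\{w\}$ does not leave the effectively-semilinear setting. Once those routine effectiveness checks are recorded, the statement follows. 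For infiniteness one additionally notes that a semilinear set is infinite iff some linear component has at least one period, which is decidable from the representation, and that $L$ is infinite iff $\psi(L)$ is infinite since $L$ is letter-equivalent to a regular language \cite{harrison1978}.
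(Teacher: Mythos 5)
Your proposal is correct and follows essentially the same route as the paper: apply Proposition~\ref{thm1} to place $\ILIN_c$ inside $\hat{\cal F}(\ILIN \wedge \NCM)$, invoke Proposition~\ref{fullAFL} with the effectively semilinear full trio $\ILIN$, and then read off emptiness, membership (via intersection with $\{w\}$), and infiniteness from the semilinear representation. Your added bookkeeping on effectiveness and the infiniteness test is consistent with, and slightly more explicit than, what the paper records.
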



Next, a machine model characterization of right linear indexed grammars with counters will be provided. Recall that
an $\NPCM$ is a pushdown automaton augmented by reversal-bounded counters.
The proof uses 
the fact that every context-free language
can be generated by a right-linear indexed grammar \cite{DP}.
%


%


\begin{proposition} \label{thm2}
$\IRLIN_c = \NPCM$.
\end{proposition}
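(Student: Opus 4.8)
The plan is to prove the two inclusions separately, exploiting the characterization already established in Proposition~\ref{thm1} together with Lemma~\ref{lem0}, which provides the bridge between ``$\NCM$-style counting via marked letters'' and actual reversal-bounded counters, and which is stated precisely for the $\NPCM$ case as well. First I would handle $\IRLIN_c \subseteq \NPCM$. By Proposition~\ref{thm1}, every $L \in \IRLIN_c$ (using the right-linear version of that characterization) has the form $L = h(L_1 \cap L_2)$ with $L_1$ a right linear indexed language, $L_2 \in \NCM$, and $h$ a homomorphism. Since $\NPCM$ is a full trio (it is closed under homomorphism, inverse homomorphism, and intersection with regular languages --- indeed with $\NCM$ languages, which is the key closure from \cite{Ibarra1978}), it suffices to show $L_1 \in \NPCM$: then $L_1 \cap L_2 \in \NPCM$ and $h(L_1 \cap L_2) \in \NPCM$. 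But a right linear indexed grammar generates a context-free language (this is the cited fact that every right-linear indexed grammar generates a CFL, and conversely), hence $L_1$ is context-free and in particular in $\NPCM$ (an $\NPDA$ with zero counters). This gives the first inclusion.

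For the reverse inclusion $\NPCM \subseteq \IRLIN_c$, I would take $L \in \NPCM$ and apply the $\NPCM$ case of Lemma~\ref{lem0}: there is an $\NPDA$ $R$ over $\Sigma \cup C$ with $L = h_c(\{w \mid w \in R,\ |w|_{c_1} = \cdots = |w|_{c_k}\})$. The language $L(R)$ is context-free, so by the cited result it is generated by a right linear indexed grammar $G_R$. Now mimic the construction in Lemma~\ref{lem1}: build a right linear indexed grammar with counters $G'$ from $G_R$ by replacing each production $\alpha \to \beta$ of $G_R$ with $\alpha \to (h_c(\beta), |\beta|_{c_1}, \ldots, |\beta|_{c_k})$, i.e.\ strip the marker letters $c_i$ out of the terminal string and instead record them as counter increments. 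The underlying grammar of $G'$ is right linear, so $G'$ is a right linear indexed grammar with counters, and by the acceptance condition (all counters equal at the end) $G'$ generates exactly $h_c(\{w \in L(R) \mid |w|_{c_1} = \cdots = |w|_{c_k}\}) = L$. Hence $L \in \IRLIN_c$.

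The step I expect to be the main obstacle --- or at least the one needing the most care --- is ensuring that the marker-stripping construction in the second inclusion genuinely preserves \emph{right-linearity} and interacts correctly with the index mechanism. A right-linear indexed production has terminals only to the left of its single nonterminal, and $h_c$ only deletes letters, so $h_c(\beta)$ still has terminals only to the left; the index operations ($Bf$, $Af \to \cdots$) are untouched by $h_c$. One should also double-check that the positions of the $c_i$'s in $R$ can be taken to lie at the front of derived terminal blocks (as arranged in the $R' = (b_1c_1)^*\cdots R$ trick inside Lemma~\ref{lem0}), so that the rewrite does not disturb the left-of-nonterminal discipline; this is routine but worth a sentence. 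Everything else --- full-trio closure of $\NPCM$, CFL $=$ right-linear-indexed, Proposition~\ref{thm1} --- is quoted directly.
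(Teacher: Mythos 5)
Your proposal is correct and follows essentially the same route as the paper: Lemma~\ref{lem0} (the $\NPCM$ case) plus the fact that right-linear indexed grammars generate exactly the context-free languages, with the marker-stripping construction of Lemma~\ref{lem1} for $\NPCM \subseteq \IRLIN_c$ and the marker-reinserting construction of Lemma~\ref{lem2} followed by counter checks and erasure (which you phrase as full-trio/$\NCM$-intersection closure of $\NPCM$, the paper as an explicit machine construction) for the converse. The only cosmetic difference is that you route the forward inclusion through an unstated right-linear analogue of Proposition~\ref{thm1}, which is exactly what the paper's proof establishes inline.
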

\begin{proof}
First, it will be show that $\NPCM \subseteq \IRLIN_c$.
Let $L \in \NPCM$. Then, by Lemma \ref{lem0}, there is an $\NPDA$ $L_1$
over $\Sigma \cup C$
such that $L = h_c(\{w ~|~ w \in L_1, |w|_{c_1} = \cdots = |w|_{c_k} \})$.
It is known that every context-free language can be generated by a right-linear indexed grammar \cite{DP}, and hence
there is a right-linear indexed grammar $G_1$ generating $L_1$.
Construct from $G_1$, a right-linear indexed grammar $G$ with counters
generating $L$, such that, 
if $\alpha \rightarrow \beta$ is a production in $G_1$,
then
$\alpha \rightarrow (h_c(\beta), |\beta|_{c_1}, \ldots, |\beta|_{c_k})$ is a production in $G$. Then $L(G) = L$.

Next, the converse will be shown.
Let $G$ be a right-linear indexed grammar with counters.
We first construct a right-linear grammar (without counters)
$G_1$ generating a language $L_1$ as in the proof of Lemma \ref{lem2}.
Then $L(G_1)$ is a context-free language, and can be accepted by an $\NPDA$ $M_1$.
We then construct an $\NPCM$ $M$ which, when given an input 
$w$, simulates $M_1$ and checks that $|w|_{c_1} = \cdots = |w|_{c_k}$
using 1-reversal counters.  Finally, we construct from $M$ another
$\NPCM$ $M'$ which erases the $c_i$'s.  Clearly, $L(M') = L$.

\end{proof}

We conjecture that the family of languages generated by right-linear
indexed grammars with counters (the family of $\NPCM$ languages)
is properly contained in the family of languages generated by
linear indexed grammars with counters. Candidate witness languages
are $L = \{ w\$w  ~|~  w \in \{a,b,c\}^* , |w|_a + |w|_b = |w|_c \}$ and
$L' = \{ w\$w  ~|~  w \in \{a,b\}^* \}$. It is known that $L'$ is generated
by a linear indexed grammar \cite{DP}, and hence $L$ can be generated by such a grammar
with two counters. But, both $L'$ and $L$ seem unlikely to be
accepted by any $\NPCM$. 
Therefore, indexed grammars with counters form quite a general semilinear family as it seems likely to be more
general than $\NPCM$.


Next, another subfamily of indexed languages is studied that are even more
expressive than linear indexed grammars but still only generate semilinear languages.

An indexed grammar  $G = (V,\Sigma,I,P,S)$ is said to be {\em uncontrolled} index-$r$ if,
every sentential form in every successful derivation has at most $r$ nonterminals.
$G$ is
uncontrolled finite-index if $G$ is uncontrolled index-$r$, for some $r$.
Let $\UFIN$ be the languages generated by uncontrolled finite-index indexed grammars.

Uncontrolled finite-index indexed grammars have also been studied under the name of  
breadth-bounded indexed grammars in \cite{ICALP2015,LATA2017journal}, where 
it was shown that the languages generated by these grammars are a semilinear full trio.

This concept can then be carried over to indexed grammars with counters.
\begin{definition}
An indexed grammar with $k$-counters is {\em uncontrolled index-$r$} (resp.\ {\em uncontrolled finite-index}) if the
underlying grammar is uncontrolled index-$r$ (resp.\ uncontrolled finite-index). 
Let $\UFIN_c$ be the languages generated by uncontrolled finite-index indexed grammar with $k$-counters, for some $k$.
\end{definition}

One can easily verify that 
Proposition \ref{thm1} also
applies to
uncontrolled finite-index indexed grammars with counters.
Hence, we have:
\begin{proposition} 
$L \in \UFIN_c$ if and only if there is a homomorphism $h$, $L_1 \in \UFIN$,
$L_2 \in \NCM$ such that $L = h(L_1 \cap L_2)$.
\label{thm3}
\end{proposition}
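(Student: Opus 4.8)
The plan is to mirror, essentially verbatim, the proof strategy already established for Proposition \ref{thm1} (the $\ILIN_c$ characterization), exploiting the fact that none of the intermediate lemmas used there actually depended on \emph{linearity}: they only used that the underlying grammar family is a full trio, that the counter-encoding trick (appending a terminal string $c_1^{d_1}\cdots c_k^{d_k}$ to the right side of each production, or reading counter symbols off the input) is compatible with the grammar mechanism, and that $\NCM$ languages admit the normal form of Lemma \ref{lem0}. So first I would observe that the ``uncontrolled index-$r$'' restriction is preserved by all of these operations: inserting a terminal block $c_1^{d_1}\cdots c_k^{d_k}$ into a right-hand side adds no nonterminals, and erasing those terminals via $h_c$ likewise does not change nonterminal counts, so the underlying grammar of the constructed grammar-with-counters is uncontrolled index-$r$ exactly when the original one is.

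For the ``if'' direction I would redo Lemma \ref{lem1} and Corollary \ref{cor1} in this setting: given $L_1 \in \UFIN$ and $L_2 \in \NCM$, apply Lemma \ref{lem0} to get a regular $R$ over $\Sigma \cup C$ with $L_2 = h_c(\{w \in R \mid |w|_{c_1} = \cdots = |w|_{c_k}\})$; since $\UFIN$ is a full trio (shown in \cite{ICALP2015,LATA2017journal}), $L_1' = h_c^{-1}(L_1)$ and then $L_3 = L_1' \cap R$ are in $\UFIN$, and moreover intersecting with a regular language does not increase the index, so an uncontrolled index-$r$ indexed grammar $G_3$ generates $L_3$. Converting each production $\alpha \to \beta$ of $G_3$ to $\alpha \to (h_c(\beta), |\beta|_{c_1}, \ldots, |\beta|_{c_k})$ yields an uncontrolled finite-index indexed grammar with counters generating $L_1 \cap L_2$; closure of $\UFIN_c$ under homomorphism (immediate, by composing the terminal homomorphism into the productions) then gives $h(L_1 \cap L_2) \in \UFIN_c$.

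For the ``only if'' direction I would replay Lemma \ref{lem2}: given $L \in \UFIN_c$ generated by $G$, build the counter-free grammar $G_1$ by replacing each rule $\alpha \to (\beta, d_1, \ldots, d_k)$ with $\alpha \to c_1^{d_1}\cdots c_k^{d_k}\beta$; this $G_1$ is still uncontrolled index-$r$ (no new nonterminals), so $L_1 = L(G_1) \in \UFIN$, and with $L_2 = \{w \in (\Sigma \cup C)^* \mid |w|_{c_1} = \cdots = |w|_{c_k}\} \in \NCM$ we get $L = h_c(L_1 \cap L_2)$. Combining the two directions gives the proposition.

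The only point requiring any care — and the place I would expect a careful reader to want a sentence of justification — is the claim that the index (the nonterminal count in sentential forms) is genuinely invariant under the inverse-homomorphism and intersection-with-regular operations used on $\UFIN$, and under the production rewriting. For the grammar rewritings this is transparent since we only ever add or remove \emph{terminals}. For the full-trio operations, the cleanest route is simply to cite that $\UFIN$ is a full trio \cite{ICALP2015,LATA2017journal} and note that the standard constructions witnessing this (product with a finite automaton, etc.) preserve the index bound; alternatively one can note Lemma \ref{lem0} already packages $L_2$ so that only $h_c^{-1}$ and $\cap R$ are applied to $L_1$, both of which are realized on uncontrolled index-$r$ grammars without introducing extra nonterminals. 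Hence the proof is, as the paper says, an easy verification that Proposition \ref{thm1}'s argument carries over.
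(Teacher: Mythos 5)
Your proposal is correct and is exactly the argument the paper intends: the paper's own ``proof'' is the one-line remark that the argument for Proposition~\ref{thm1} carries over, and you have simply written out that verification, correctly isolating the only point needing care (that the constructions in Lemmas~\ref{lem0}--\ref{lem2} add or erase only terminals, and that closure of $\UFIN$ under the full trio operations keeps $L_3$ uncontrolled finite-index). Note that you do not even need the index bound to be \emph{invariant} under the full trio operations --- membership of $h_c^{-1}(L_1)\cap R$ in $\UFIN$ for \emph{some} index $r'$ already suffices, and that is exactly what the cited full trio property of $\UFIN$ provides.
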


Implied from the above proposition and Proposition \ref{fullAFL} also is that these new languages are all semilinear.
\begin{corollary}
$\UFIN_c$ is effectively semilinear, with decidable emptiness, membership, and infiniteness problems.
\end{corollary}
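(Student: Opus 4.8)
The plan is to reduce this to the machinery already developed in the paper. The statement to prove is that $\UFIN_c$ is effectively semilinear, with decidable emptiness, membership, and infiniteness problems. The key input is Proposition \ref{thm3}, which characterizes $\UFIN_c$ as exactly the languages of the form $h(L_1 \cap L_2)$ with $h$ a homomorphism, $L_1 \in \UFIN$, and $L_2 \in \NCM$. So the first step is to observe that any $L \in \UFIN_c$ lies in $\hat{\cal H}(\UFIN \wedge \NCM) \subseteq \hat{\cal F}(\UFIN \wedge \NCM)$, simply because applying a single homomorphism to a language in $\UFIN \wedge \NCM$ keeps us inside the smallest full AFL containing that family.

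Second, I would invoke the fact recalled in the excerpt (and in the Corollary following Proposition \ref{fullAFLClosed}) that $\UFIN$ — equivalently, the breadth-bounded/uncontrolled finite-index indexed languages — is a semilinear full trio, and moreover that this is effective: $\UFIN$ is an \emph{effectively} semilinear full trio by \cite{ICALP2015,LATA2017journal}, and the closure under the full trio operations (in particular intersection with regular languages) is effective. Then Proposition \ref{fullAFL} applies directly with $\LL = \UFIN$: $\hat{\cal F}(\UFIN \wedge \NCM)$ is semilinear, and tracing through the (effective) construction in the proofs of Proposition \ref{prop1}, Lemma \ref{lemma2}, and Proposition \ref{fullAFL}, the semilinear representation can actually be computed from a grammar for $L$. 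Since $\UFIN_c \subseteq \hat{\cal F}(\UFIN \wedge \NCM)$, every language in $\UFIN_c$ is effectively semilinear.

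Third, for the decidability claims, I would appeal to the general observation made in the paragraph preceding the Corollary to Proposition \ref{fullAFLClosed}: for a semilinear full trio where semilinearity and intersection with regular languages are effective, emptiness is decidable (test whether the computed semilinear set is empty), membership is decidable (intersect with $\{w\}$, which is regular, then test emptiness), and infiniteness is decidable (a semilinear set is infinite iff some linear component has a nonzero period, which is readable off the representation). Since $\hat{\cal F}(\UFIN \wedge \NCM)$ is such a family — it is a full AFL, hence a full trio, with effective closure properties and effective semilinearity — these decision procedures apply to all of $\UFIN_c$; given an uncontrolled finite-index indexed grammar with counters, one first applies the effective constructions of Proposition \ref{thm3}'s proof (which are the analogues of Lemmas \ref{lem1}, \ref{lem2}) to produce $h$, a grammar for $L_1 \in \UFIN$, and an $\NCM$ for $L_2$, and then runs the semilinearity algorithm.

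The only mild obstacle is verifying that Proposition \ref{thm1}'s proof — and hence Proposition \ref{thm3} — really does carry over verbatim to the uncontrolled finite-index setting; this hinges on $\UFIN$ being a full trio (so that $h_c^{-1}(L_1) \cap R$ stays in $\UFIN$, exactly as in Lemma \ref{lem1}) and on the class of uncontrolled finite-index indexed grammars with counters being closed under homomorphism (for Corollary \ref{cor1}'s analogue). Both facts are already noted in the excerpt, so the argument is essentially a bookkeeping composition of results already in hand; no genuinely new idea is needed.
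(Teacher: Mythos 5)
Your proposal is correct and follows the paper's own route: it derives the result from Proposition \ref{thm3} (placing $\UFIN_c$ inside $\hat{\cal F}(\UFIN \wedge \NCM)$), the effective semilinearity of $\UFIN$ as a full trio, and Proposition \ref{fullAFL} together with the general decidability observation for effectively semilinear full trios. The paper states this in a single sentence; your write-up simply makes the same chain of implications explicit.
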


Hence, $\IRLIN_c \subseteq \ILIN_c \subseteq \UFIN_c$. We conjecture that both containments are strict; the first was discussed previously,
and the second is likely true since $\ILIN \subsetneq \UFIN$ \cite{LATA2017journal}. Hence, $\UFIN_c$ forms quite a general
semilinear family, containing $\NPCM$ with positive decidability properties.

\section{Conclusions and Future Directions}
It has been previously shown that certain types of machine models accepting only semilinear languages can be augmented by reversal-bounded counters to create a more general machine model, while maintaining semilinearity and positive decision properties. However, this approach did not clearly define what types of models would work with this augmentation, and it did not work with other mechanisms for describing languages. Here, a closure property theoretic method is developed, and it is shown that, for every semilinear full trio $\LL$, the smallest full AFL containing $\LL$ also closed under intersection with reversal-bounded multicounter languages ($\NCM$) is semilinear. Furthermore, the semilinearity is effective in the resulting family if it is effective (with other properties) in $\LL$. 

This can be applied in numerous ways. For example, it is shown that if certain subclasses of indexed grammars (linear indexed, or uncontrolled finite-index) are augmented by counters with additional components of the grammars that function like counters, then the resulting families are more general, yet they remain semilinear and have decidable emptiness, membership, and infiniteness problems.
There are also other applications, such as to analyzing definitions of abstract automata with multitape stores.

Several open problems remain. It is open whether the application of Proposition 
\ref{fullAFL} creates a strict hierarchy. With respect to indexed grammars with counters, it is open as to whether right-linear grammars are strictly weaker than linear grammars, and whether those are weaker than uncontrolled finite-index grammars.

\section*{Acknowledgments}
The research of O. H. Ibarra was supported, in part, by
NSF Grant CCF-1117708. The research of I. McQuillan was supported, in part, by Natural Sciences and Engineering Research Council of Canada Grant 2016-06172.

\bibliography{bounded}{}
\bibliographystyle{ws-ijfcs}

\end{document}